\newtheorem{corollary}{Corollary}
\newtheorem{lemma}{Lemma}
\newtheorem{proposition}{Proposition}
\newcommand{\C}{\mathcal{C}}
\DeclareMathOperator{\diag}{diag}
\begin{document}
\title{$1$-bit RIS-aided Index Modulation with \\Quantum Annealing}

\author{Ioannis Krikidis, \IEEEmembership{Fellow, IEEE}, Constantinos Psomas, \IEEEmembership{Senior Member, IEEE}, and Gan Zheng, \IEEEmembership{Fellow, IEEE}\vspace*{-5mm}
\thanks{I. Krikidis is with the Department of Electrical and Computer Engineering, University of Cyprus, Cyprus (e-mail: krikidis@ucy.ac.cy).}
\thanks{C. Psomas is with the Department of Computer Science and Engineering, European University Cyprus, Cyprus (e-mail: c.psomas@euc.ac.cy).}
\thanks{G. Zheng is with the School of Engineering, University of Warwick, Coventry, CV4 7AL, UK (e-mail: gan.zheng@warwick.ac.uk).} \thanks{This work was supported by the European Research Council (ERC) under the Horizon Europe programme (Grant No. 101241675, PoC QUARTO), the EU Horizon-JU SNS programme (Grant No. 101192080, 6G-LEADER), and the Cyprus Research \& Innovation Foundation (Grant No. DUAL USE/0922/0031). Part of this work was also supported by the UK Engineering and Physical Sciences Research Council (EPSRC) grant EP/X04047X/2 for TITAN Telecoms Hub.}}

\maketitle

\begin{abstract}
In this paper, we investigate a new index modulation (IM) scheme for reconfigurable intelligent surface (RIS)-assisted communications with $1$-bit RIS phase resolution. In addition to the traditional modulated symbols, extra bits of information are embedded in the binary RIS phase vector by indexing the cardinality of the positive phases shifts. To maximize capacity, the IM-based RIS vector is selected so as to maximize the signal-to-noise ratio at the receiver. The proposed IM design requires the solution of a quadratic binary optimization problem with an equality constraint at the transmitter as well as a quadratic unconstrained binary optimization (QUBO) problem at the receiver. Since commercial solvers cannot directly handle constraints, a penalty method that embeds the equality constraint in the objective function is investigated. To overcome the empirical tuning of the penalty parameter, an iterative Augmented Lagrangian optimization technique is also investigated where a QUBO problem is solved at each iteration. The proposed design and associated mathematical framework are tested in a real-world quantum annealing device provided by D-WAVE. Rigorous experimental results demonstrate that the D-WAVE heuristic efficiently solves the considered combinatorial problems. Furthermore, theoretical bounds on the average capacity are provided. Both experimental and theoretical results show that the proposed design outperforms conventional counterparts.
\end{abstract}

\begin{IEEEkeywords}
Reconfigurable intelligent surface, index modulation, $1$-bit, QUBO, Ising model, quantum annealing, optimization, D-WAVE. 
\end{IEEEkeywords}

\section{Introduction}

\IEEEPARstart{6}{G} communication systems introduce new engineering requirements such as data rates up to $1$ Tbps, extremely high connectivity, ultra reliability and energy efficiency, as well as air-interface latency $10\times$ lower  than current 5G deployments \cite{TAT}. To satisfy these requirements, industry and academia focus on new emerging technologies and communication paradigms such as, reconfigurable intelligent surfaces (RISs), THz frequency bands, integrated sensing and communications, advanced modulation techniques such as index modulation (IM), low-bit-resolution signal processing, etc. Specifically, RISs refer to intermediate passive, active or hybrid devices that scatter the incident radio signals intelligently to control the wireless channel \cite{WU}. By jointly adjusting their reflection coefficients, they can enhance the signal-to-noise ratio (SNR), achieving higher data rates and/or multi-user interference mitigation. Moreover, due to their quasi-passive nature and the reduced number of RF chains required for their operation, RISs are both cost-effective and energy-efficient \cite{JSTSP}. On the other hand, IM is a modulation technique that conveys information by varying the index of transmission, such as transmit antennas, subcarriers or precoding matrices, rather than altering the amplitude/phase/frequency of the transmitted sinewave as in conventional modulation schemes \cite{IM}. The additional information conveyed through the IM indices increases the spectral efficiency and provides robust communications in challenging environments. Furthermore, the simplicity of IM leads to reduced implementation and hardware complexity as well as improved overall energy efficiency. Finally, low-bit-resolution communication systems that utilize only one or a few bits for signal encoding, are introduced to balance the trade-off between performance and implementation complexity \cite{SIL,ALI}.

Due to the potential benefits arising from their implementation, both RIS and IM technologies have been extensively studied in the literature. See \cite{WU,IM} and references within. Hence, as expected, there are also works that investigate the synergy of these two technologies \cite{MDR}. In \cite{BAS}, the author proposes two schemes, namely, an RIS-aided space shift keying scheme and an RIS-aided spatial modulation scheme. The two schemes use the IM concept at the receiver's antennas and exploit the RIS to maximize the received signal at the selected antenna. The work in \cite{BAY} follows a similar approach with the consideration of Alamouti space-time code. The proposed technique transmits the coded information through the RIS to the targeted receive antenna. The IM principles have also been utilized at the RIS side. In particular, the work in \cite{MDR2} investigates a reflection pattern modulation, where the RIS activates a subset of its elements to convey additional information. A similar design is studied in \cite{SHEN} focusing on visible light communications. The authors of \cite{HAN} provide a technique with two RISs and propose a beam-IM scheme, ideal for millimeter wave communications.

The aforementioned works consider RISs with continuous phase shifts, {\it i.e.,} infinite resolution. However, RISs with low resolution phase shifts, {\it e.g.,} $1$-bit phase shifters, are more practical, more cost-effective and of lower complexity \cite{ZHANG}. To the best of our knowledge, low-bit resolution RIS communications with IM techniques has not been addressed yet in the literature. In fact, the design and optimization of IM-based RIS-aided communications with discrete phase shifts require a complete different approach, mainly due to the combinatorial nature of the resulting setup. What is more, all these technological breakthroughs considerably enhance computational demands and necessitate computing resources with exceptionally high capabilities. Unfortunately, traditional silicon-based Von Neumann computing architectures can not be further advanced since transistors have reached their atomic limits \cite{ITR}. Quantum computing emerges as a promising alternative to address this computational challenge, offering a suitable platform for advanced wireless technologies. The integration of quantum computing architectures and algorithms within wireless communication systems represents a crucial and emerging field of research \cite{KIM,HANZO}.

Quantum computing exploits quantum mechanics and relies on the principles of quantum tunneling, superposition and entanglement, encompassing two primary models: gate-based quantum computing and single-purpose quantum annealing (QA) model. The gate-based model operates discretely, utilizing programmable logic gates (unitary and reversible transformations) that act on qubits, resembling classical digital architectures. By sequentially interconnecting these basic logic gates, a variety of quantum algorithms {\it e.g.}, Grover search, Deutsch, Shor, quantum Fourier transform, quantum approximate optimization algorithm etc.,  can be implemented, often achieving a computational speedup compared to classical counterparts \cite{KAY}. For instance, the quadrature speed-up of the Grover's algorithm (quantum search in an unsorted database) has been already proposed in the literature to solve several physical layer problems in wireless communication systems \cite{BOT}. A first effort to combine IM with gate-based model quantum computing is presented in \cite{NI}; the proposed technique applies adaptive Grover search and simulation results are presented by using IBM Qiskit. However, gate-based quantum devices are highly susceptible to quantum decoherence, limiting the number of qubits and logic gates that can be effectively utilized. 

In contrast, the QA model is analog and grounded in the adiabatic principle of quantum mechanics (Adiabatic Theorem) \cite{MCG, YAR}. This model is particularly effective for addressing NP-hard combinatorial optimization problems, which are formulated as instances of the Ising model (spin-glass) which is equivalent to the quadratic unconstrained binary optimization (QUBO) problem. By leveraging the system's adiabatic evolution, it gradually evolves to a final Hamiltonian ({\it i.e.,} energy function representing the objective function to be minimized) whose ground state (lowest energy level) embeds the solution of the optimization problem considered. D-WAVE is a commercial analog quantum device that performs a noisy version of QA using advanced superconducting integrated circuits (niobium loops) \cite{WAVE}. Recently, it has gained significant attention for its impressive computational capabilities, including a high number of qubits, and its user-friendly cloud-based programming interface for real-time remote access. The current D-WAVE {\it Advantage} system features a quantum processing unit  with over $5,000$ flux qubits, while the forthcoming D-WAVE architectures are anticipated to incorporate more than $7,000$ flux qubits with larger qubit connectivity. Although there is not any theoretical result/proof about the performance/speed-up of the D-WAVE QA, experimental results show quantum advantages against classical heuristics for particular instances. D-WAVE QA has been used to various traditional NP-hard problems ({\it e.g.,} set cover problem, Knapsack, portfolio optimization, etc.) and experimental results show significant performance benefits \cite{FRO, KOS, DJI, TAS}. 

QA has been utilized to solve a wide range of problems in wireless communications. Early works mainly focus on addressing the maximum-likelihood signal detection problem in large-scale multiple-input multiple-output (MIMO) systems \cite{JAM1}. Experimental results show significant performance benefits in comparison to conventional solutions as well as limitations due to the number of qubits available and the noise effects (integrated control errors) \cite{ZABO}. QA has been also used to address other design problems of combinatorial nature in wireless communication systems such as beam assignment for satellite systems \cite{DIN}, design of RIS phases shifts in RIS-assisted communications \cite{LIM}, multi-user detection for non-orthogonal multiple access \cite{YON}, decoding for polar codes \cite{KASI}, antenna configuration selection in fluid-antenna MIMO systems \cite{KRI1}, etc. In our recent work, D-WAVE QA has been also used to design pre/post coding vectors for MIMO point-to-point MIMO systems with $1$-bit analogue \cite{KRI} and digital resolution \cite{KRI2}, respectively. The results show that D-WAVE QA is a promising solver/heuristic for low-bit-resolution signal processing techniques that are represented by QUBO instances. It is also worth noting that QA has been employed as an essential block to speed-up the training process in machine-learning models, which has potential interest in the design of wireless communication systems \cite{AMI,DAT}.   

In contrast to the discussed literature, this paper investigates a novel $1$-bit RIS signal processing technique, which merges RIS-aided communication with IM. Specifically, the new IM design embeds information bits in the binary RIS phase vectors by indexing the cardinality of their distinct phases shifts. Then, the RIS chooses the vector that also maximizes the received SNR. The proposed IM scheme requires the solution of a binary optimization problem with an equality constraint at the transmitter and a QUBO problem at the receiver, which are solved in a real-world D-WAVE QA machine by using an appropriate mathematical framework. Specifically, the major contributions of the paper are summarized as follows: 
\begin{itemize}
\item A new RIS-based IM technique is proposed in which index information is embedded in the phase beamforming vectors that are used by the RIS. By taking into account the binary resolution $\{+1,-1\}$ of the phase shifts, additional information bits are conveyed by using the cardinality of $\pm 1$ entries in the RIS vectors as IM information. To maximize capacity, the RIS vector (with a given number of $\pm 1$ phase shifts) that maximizes the SNR is selected for the IM process. Numerical results show that the proposed scheme outperforms conventional passive beamforming with binary phase shifts, while the performance advantages increase as the number of RIS elements increases.    
\item The proposed RIS-based IM scheme is represented by a combinatorial binary problem with a single equality constraint at the transmitter side, as well as a conventional QUBO problem at the receiver side. To tackle the equality constraint, we firstly study the penalty method which embeds the constraint into objective function by using an auxiliary penalty parameter. An iterative technique that iterates over different penalty parameters is also considered. To avoid the empirical tuning of the penalty parameter, we investigate the Augmented Lagrangian (AL) method, which combines Lagrangian optimization with the penalty method; this iterative method solves a QUBO problem at each iteration, while it converges to the solution faster and in a more smooth way due to the additional free variable.
\item Theoretical expressions are derived for the average SNR achieved by the proposed RIS-based IM scheme, characterizing how the cardinality of $\pm 1$ entries in the RIS vector affects the received signal power. The proposed scheme performs close to conventional beamforming (in terms of SNR) when the numbers of $+1$ and $-1$ phase shifts are equal (or approximately equal) and, for a sufficiently large number of RIS elements, the two schemes have similar performance. Moreover, we derive an upper bound on the average capacity. Our results show that the RIS-based IM technique outperforms conventional beamforming in terms of average capacity.
\item To solve the considered QUBO problems, we employ a real-world state-of-the art QA device {\it i.e.,} D-WAVE {\it Advantage\_System6.4}. Rigorous experimental results show that the proposed RIS-based IM design can be efficiently solved through QA. Due to the low resolution of the quantum hardware, the AL technique seems to outperform the penalty method for large RIS configurations, since it handles the equality constraint in a more numerically robust manner.    
\end{itemize}

\noindent {\it Notation:} Lower and upper case bold symbols denote vectors and matrices, respectively; the superscripts $(\pmb{x})^{T}$, $(\pmb{x})^{H}$ denote transpose and conjugate transpose of the vector $\pmb{x}$, respectively; $\mathbbm{E}(\cdot)$ denotes the statistical expectation; $\pmb{1}$ is a full-ones column vector of appropriate dimension, $\Re(\cdot)$ returns the real part of its complex argument, $\diag(\pmb{x})$ denotes a diagonal matrix whose main diagonal is $\pmb{x}$, $\pmb{I}$ is the identity matrix of appropriate dimension, $\mathbb{C}^{N\times M}$ denotes the set of complex-valued $N\times M$ matrices, and $\mathcal{CN}(\mu,\sigma^2)$ represents the complex Gaussian distribution with mean $\mu$ and variance $\sigma^2$.

The remainder of the paper is organized as follows: Section \ref{sec2} introduces the system model and provides some useful background. Section \ref{sec3} presents the proposed RIS-aided IM scheme with the associated optimization problems and theoretical framework, whereas Section \ref{sec4} focuses on the solution of the combinatorial problems considered by using the D-WAVE solver.  Numerical simulation results are provided in Section \ref{sec5}. Our conclusions are discussed in Section \ref{sec6}.

\begin{figure}
\centering
\includegraphics[width=\columnwidth]{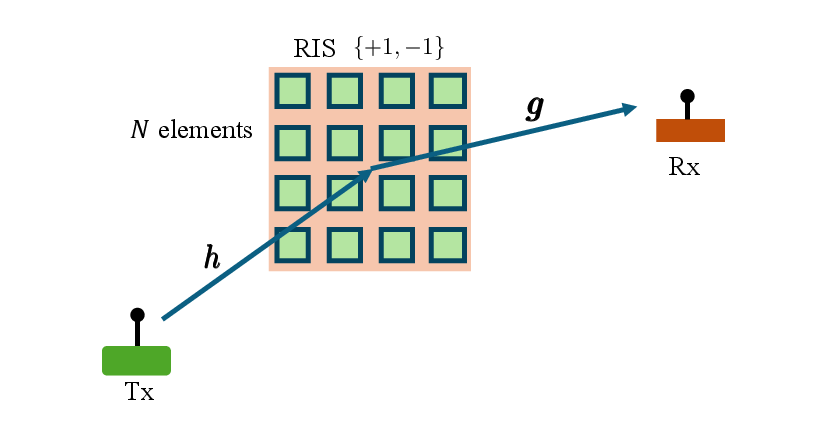}
\vspace{-1cm}
\caption{The system model consisting of one transmitter, one receiver, and a passive RIS with $N$ reflecting elements and $1$-bit resolution.}\label{sys_mod}
\end{figure}

\section{System model}\label{sec2}

We assume a simple RIS-aided communication setup consisting of a single-antenna transmitter, a single-antenna receiver, and a passive RIS with $N$ reflecting elements. Suppose that the channel between the transmitter and the RIS is denoted by $\pmb{h} \in \mathbb{C}^{N\times 1}$, while the channel between the RIS and the receiver is given by $\pmb{g} \in \mathbb{C}^{N\times 1}$ (standard cascaded RIS channel model). Due to obtacles and deep shadowing, a line of sight (LOS) link does not exist and communication between the transmitter and the receiver is performed through the RIS \cite{BAS,BAY}. We assume independent and identically distributed (i.i.d.) Rayleigh fading, so that the entries of $\pmb{h}$ and $\pmb{g}$ are independent circularly symmetric complex Gaussian random variables with zero mean and unit variance. To reduce power consumption and signalling, the phase shifts of the RIS have binary resolution and thus can take two discrete values {\it i.e.,} $+1$ and $-1$ corresponding to the phase shifts $0$ and $\pi$, respectively. If $P_t$ is the transmit power, the received SNR can be written as
\begin{align}
\Gamma&=\frac{P_t|\pmb{g}^H \diag(\pmb{x}) \pmb{h} |^2}{N_0}, \nonumber \\
&=\frac{P_t \pmb{h}^H \diag(\pmb{x}^T)\pmb{g}\pmb{g}^H\diag(\pmb{x})\pmb{h}}{N_0} \nonumber \\
&=\frac{P_t \pmb{x}^T\big(\diag(\pmb{h}^H)\pmb{g}\pmb{g}^H\diag(\pmb{h})  \big) \pmb{x}}{N_0} \nonumber \\
&=\frac{P_t \pmb{x}^T \pmb{R} \pmb{x}}{N_0},
\end{align}
where $\pmb{R}=\diag(\pmb{h}^H)\pmb{g}\pmb{g}^H\diag(\pmb{h})$,  $\pmb{x}=[x_1, x_2, \ldots, x_N]^T$ denotes the binary RIS phase shift vector, with each element $x_i \in \{+1,-1\}$ representing a discrete phase shift of $0$ or $\pi$, respectively, imposed by the $i$-th RIS element, and $N_0$ is the variance of the additive white Gaussian noise (AWGN) at the receiver. Global channel state information (CSI) is assumed at both the transmitter and the receiver, which can be acquired using techniques such as those proposed in \cite{GZ,FF}. Maximizing the received SNR directly increases the achievable Shannon capacity, since the capacity expression $C=\log_2(1+\textrm{SNR})$ is a strictly increasing function of SNR under Gaussian signaling. Fig. \ref{sys_mod} schematically depicts the system model.

\subsection{ Preliminaries: Ising-to-QUBO Transformation}

The optimization problems encountered in this work naturally arise in the Ising formulation, since the RIS phase shift variables are constrained to the spin set $\{+1,-1\}$. This representation closely reflects the physical nature of the system and aligns with the native spin-based variables used in quantum annealing hardware, which operates according to the Ising spin-glass model. The standard Ising form is given by
\begin{align}
\textrm{(Ising problem)}\;\;\;\;\min_{\pmb{x}} \sum_{i}u_i x_i+ \sum_{i<j} J_{i,j}x_ix_j, \label{isi1}
\end{align}
where the real coefficients $u_{i}$ and $J_{i,j}$ denote the linear (bias) and the quadratic (coupler strength) terms, respectively, $x_i$ are spin variables taking values in the set $\{+1,-1\}$. To reformulate the problem in a form suitable for our computational framework, we adopt the equivalent QUBO formulation by applying the transformation $x_i\rightarrow 2b_i-1$ \cite{JAM1} {\it i.e.,}
\begin{align}
\textrm{(QUBO problem)}&\;\;\min_{\pmb{b}}\sum_{i}G_{i,i}b_i+\sum_{i<j}G_{i,j}b_ib_j \nonumber \\
&=\;\min_{\pmb{b}}\; \pmb{b}^T \pmb{G}\pmb{b},
\end{align}
where $b_i \in \{0,1\}$ are binary variables, $G_{i,j}=4J_{i,j}$ and $G_{i,i}=2u_i-2\sum_{k<i}J_{k,i}-2\sum_{i<k}J_{i,k}$. QUBO/Ising problems belong to NP-hard complexity class. Due to the equivalence between NP-hard problems, many optimization problems can be represented as QUBO/Ising {\it e.g., traveling salesman, nurse scheduling, max-cut, etc.}. Although the Ising and QUBO formulations are mathematically equivalent, the QUBO representation is more commonly supported by commercial solvers, allows for easier embedding of constraints, and is more directly supported by the D-WAVE Ocean SDK and its associated tools (the quantum annealing platform used in our experimental results). For these practical reasons, we adopt the QUBO form in our study, while still referring to the Ising model where it offers clearer physical or theoretical insight ({\it e.g.,} in the context of adiabatic quantum evolution in Section \ref{dw}). To facilitate the reformulation process, we now present four key lemmas that convert Ising-type terms (both objective functions and constraints) into equivalent QUBO representations.

\begin{lemma}\label{lma}
The quadratic Ising term $\pmb{x}^T \pmb{J}\pmb{x}$, where $\pmb{x}$ is a spin vector with elements in $\{+1,-1\}$, and $\pmb{J}$ is a hermitian $N\times N$ matrix, is equivalent to the quadratic binary expression $\pmb{b}^T \pmb{J}' \pmb{b}+C$, where $\pmb{b}$ is a binary vector, $\pmb{J}'=\Re\left(4\pmb{J}-4\diag(\pmb{J}\pmb{1})\right)$ and $C$ is a real-valued constant.
\end{lemma}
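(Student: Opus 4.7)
The plan is to apply the spin-to-binary substitution $\pmb{x}=2\pmb{b}-\pmb{1}$ directly to the quadratic form and then massage the result into QUBO shape by exploiting two facts: the idempotence $b_i=b_i^2$ for binary variables, and the Hermitian structure of $\pmb{J}$.

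First, I would expand
\begin{align*}
\pmb{x}^T\pmb{J}\pmb{x} &=(2\pmb{b}-\pmb{1})^T\pmb{J}(2\pmb{b}-\pmb{1})\\
&=4\pmb{b}^T\pmb{J}\pmb{b}-2\pmb{b}^T\pmb{J}\pmb{1}-2\pmb{1}^T\pmb{J}\pmb{b}+\pmb{1}^T\pmb{J}\pmb{1},
\end{align*}
which naturally isolates the pure-quadratic piece, two linear-in-$\pmb{b}$ cross terms, and a scalar constant that will become $C=\pmb{1}^T\pmb{J}\pmb{1}$ (this is real because $\pmb{J}$ is Hermitian and $\pmb{1}$ is real).

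Second, I would eliminate the linear terms by the standard trick $b_i=b_i^2$. Writing $\pmb{c}=\pmb{J}\pmb{1}$, the linear form $\pmb{b}^T\pmb{c}=\sum_i c_i b_i=\sum_i c_i b_i^2$ is exactly $\pmb{b}^T\diag(\pmb{J}\pmb{1})\pmb{b}$, and symmetrically $\pmb{1}^T\pmb{J}\pmb{b}=\pmb{b}^T\diag(\pmb{J}^T\pmb{1})\pmb{b}$. Absorbing these into the quadratic gives
\begin{align*}
\pmb{x}^T\pmb{J}\pmb{x}=\pmb{b}^T\bigl[4\pmb{J}-2\diag(\pmb{J}\pmb{1})-2\diag(\pmb{J}^T\pmb{1})\bigr]\pmb{b}+C.
\end{align*}
Because $\pmb{J}$ is Hermitian, $\pmb{J}^T=\overline{\pmb{J}}$, so $\diag(\pmb{J}\pmb{1})+\diag(\pmb{J}^T\pmb{1})=2\diag(\Re(\pmb{J}\pmb{1}))$, collapsing the bracket to $4\pmb{J}-4\diag(\Re(\pmb{J}\pmb{1}))$.

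Third, to obtain the stated real matrix $\pmb{J}'=\Re(4\pmb{J}-4\diag(\pmb{J}\pmb{1}))$, I would argue that the imaginary part of the bracket contributes nothing to the quadratic form. Indeed, since $\pmb{J}$ is Hermitian, $\Im(\pmb{J})$ is a real skew-symmetric matrix, and for any real vector $\pmb{b}$ and skew-symmetric real $\pmb{A}$ one has $\pmb{b}^T\pmb{A}\pmb{b}=0$; hence replacing the bracket by its real part leaves $\pmb{b}^T\pmb{J}'\pmb{b}$ unchanged, yielding exactly the claimed identity.

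The proof is essentially mechanical; the only subtlety is bookkeeping between the complex Hermitian $\pmb{J}$ and the real output matrix $\pmb{J}'$. The one careful step where a mistake is easy to make is justifying that the real-part operator may be pushed outside the whole bracket, which relies on the skew-symmetry of $\Im(\pmb{J})$ together with the realness of $\pmb{b}$ and $\pmb{1}$.
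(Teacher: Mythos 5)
Your proof is correct and follows essentially the same route as the paper's: expand the affine spin-to-binary substitution, absorb the linear terms via $b_i=b_i^2$ into $\diag(\pmb{J}\pmb{1})$-type terms, and use the Hermitian structure of $\pmb{J}$ (skew-symmetry of $\Im(\pmb{J})$) to reduce everything to the real part. The only cosmetic difference is your choice $\pmb{x}=2\pmb{b}-\pmb{1}$ versus the appendix's $\pmb{x}=\pmb{1}-2\pmb{b}$, which is immaterial for the quadratic form since it is invariant under $\pmb{x}\to-\pmb{x}$.
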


\begin{proof}
The proof is given in Appendix \ref{apA}. 	
\end{proof}

\begin{lemma}\label{lmb}
The linear Ising term $\pmb{x}^T\pmb{a}$, where $\pmb{x}$ is a spin vector with elements in $\{+1,-1\}$, and $\pmb{a}$ is a real-valued vector, is equivalent to the binary quadratic expression $\pmb{b}^T \diag(-2\pmb{a})\pmb{b}+C$, where $\pmb{b}$ is a binary vector and $C$ is a real-valued constant.
\end{lemma}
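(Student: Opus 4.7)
The proof is essentially a direct computation paralleling the derivation of Lemma~\ref{lma}, only simpler because the Ising term in question is linear rather than quadratic. The plan is to invoke the standard spin-to-binary transformation relating $x_i\in\{+1,-1\}$ and $b_i\in\{0,1\}$ by an affine map (the same device used in the Ising-to-QUBO reformulation of the preliminaries), substitute it into $\pmb{x}^T\pmb{a}=\sum_i a_i x_i$, and expand.

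After substitution, I would obtain an affine expression in $\pmb{b}$ of the form $\sum_i c_i b_i + C_0$, where the coefficients $c_i$ are proportional to $a_i$ (up to a factor of $\pm 2$ coming from the slope of the transformation) and the additive term $C_0$ depends only on $\pmb{a}$, not on $\pmb{b}$. The key algebraic identity that now does the work is the QUBO idempotency relation $b_i^2=b_i$ for $b_i\in\{0,1\}$, which lets me rewrite any purely linear term in $\pmb{b}$ as a diagonal quadratic form,
\begin{equation}
\sum_i c_i b_i \;=\; \sum_i c_i b_i^2 \;=\; \pmb{b}^T\diag(\pmb{c})\pmb{b}.
\end{equation}
Choosing the signs of the spin-to-binary map so that $\pmb{c}=-2\pmb{a}$ recovers exactly the claimed expression $\pmb{b}^T\diag(-2\pmb{a})\pmb{b}+C$, and the constant $C_0$ (together with any other $\pmb{b}$-independent residual) is absorbed into the real constant $C$, which is immaterial for optimization.

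There is no genuine obstacle in this argument; the only point requiring care is to keep the sign convention of the affine transformation consistent with the one adopted in Lemma~\ref{lma} and in the preliminary Ising-to-QUBO derivation, so that the two lemmas can be combined additively when reformulating a full Ising instance (linear plus quadratic terms) into a single QUBO. Once that bookkeeping is done, the equivalence follows in two lines and the constant $C$ can be written explicitly in closed form in terms of the entries of $\pmb{a}$.
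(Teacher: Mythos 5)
Your proposal is correct and follows essentially the same route as the paper: substitute the affine spin-to-binary map $\pmb{x}=\pmb{1}-2\pmb{b}$ into $\pmb{x}^T\pmb{a}$, use the idempotency $b_i^2=b_i$ to rewrite the resulting linear term $-2\pmb{b}^T\pmb{a}$ as $\pmb{b}^T\diag(-2\pmb{a})\pmb{b}$, and absorb the leftover $\pmb{1}^T\pmb{a}$ into the constant $C$. Your remark about keeping the sign convention consistent with Lemma~\ref{lma} is exactly the right bookkeeping point (the appendices use $x_i=1-2b_i$, which yields the $-2\pmb{a}$ coefficient directly).
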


\begin{proof}
The proof is given in Appendix \ref{apB}. 	
\end{proof}

\begin{lemma}\label{lmc}
The linear equality $\pmb{x}^T\pmb{a}=c$, where $\pmb{x}$ is a spin vector with elements in $\{+1,-1\}$, $\pmb{a}$ is a real-valued vector, and $c>0$ is a constant, is equivalent to the binary quadratic equality $\pmb{b}^T \pmb{J}\pmb{b}+C=0$, where $\pmb{b}$ is a binary vector, $\pmb{J}=4\pmb{a}\pmb{a}^T+4(c-\pmb{a}^T\pmb{1})\diag(\pmb{a})$ and $C=c(c-2\pmb{a}^T\pmb{1})$. 
\end{lemma}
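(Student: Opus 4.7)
The natural strategy is to reduce this lemma to the two preceding ones by invoking the standard squaring trick used in QUBO penalty formulations. Since over the reals $\pmb{x}^T\pmb{a}=c$ if and only if $(\pmb{x}^T\pmb{a}-c)^2=0$, I would first rewrite the linear Ising equality as the nonnegative squared expression
\begin{equation*}
(\pmb{x}^T\pmb{a}-c)^2 \;=\; \pmb{x}^T\bigl(\pmb{a}\pmb{a}^T\bigr)\pmb{x}\;-\;2c\,\pmb{x}^T\pmb{a}\;+\;c^2,
\end{equation*}
which now consists of one quadratic Ising term, one linear Ising term and a real constant.

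The second step is to apply Lemma \ref{lma} to $\pmb{x}^T(\pmb{a}\pmb{a}^T)\pmb{x}$, treating the rank-one real symmetric (hence Hermitian) matrix $\pmb{a}\pmb{a}^T$ as $\pmb{J}$. Using the identity $\pmb{a}\pmb{a}^T\pmb{1}=(\pmb{a}^T\pmb{1})\pmb{a}$, the resulting binary quadratic matrix is $4\pmb{a}\pmb{a}^T-4(\pmb{a}^T\pmb{1})\diag(\pmb{a})$, plus an additive constant coming from the $C$ of Lemma \ref{lma}. Next, I would apply Lemma \ref{lmb} to $-2c\,\pmb{x}^T\pmb{a}$, viewed as a linear Ising term with coefficient vector $-2c\pmb{a}$; this contributes a diagonal matrix $\diag(4c\pmb{a})$ together with another additive constant.

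Summing the two matrix contributions and using the identity $-4(\pmb{a}^T\pmb{1})\diag(\pmb{a})+4c\,\diag(\pmb{a})=4(c-\pmb{a}^T\pmb{1})\diag(\pmb{a})$ immediately yields the claimed $\pmb{J}=4\pmb{a}\pmb{a}^T+4(c-\pmb{a}^T\pmb{1})\diag(\pmb{a})$. All remaining scalar terms (those returned by Lemmas \ref{lma} and \ref{lmb}, together with the $c^2$ from the expansion) can then be collected into the single constant $C$, completing the equivalence.

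I do not anticipate a serious obstacle beyond careful algebra: once the squaring step is in place, the rest is bookkeeping that mechanically invokes the earlier lemmas. The only point worth flagging is that the squaring is reversible, i.e., $(\pmb{x}^T\pmb{a}-c)^2=0$ if and only if $\pmb{x}^T\pmb{a}=c$ over the reals, which holds regardless of the sign of $c$; thus the hypothesis $c>0$ is not essential to the derivation but is natural in the RIS--IM application (where $c$ will be a cardinality parameter) that motivates the lemma.
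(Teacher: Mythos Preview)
Your approach is correct and essentially identical to the paper's: square the linear constraint, expand, apply Lemma~\ref{lma} to the quadratic piece $\pmb{x}^T(\pmb{a}\pmb{a}^T)\pmb{x}$ and Lemma~\ref{lmb} to the linear piece $-2c\,\pmb{x}^T\pmb{a}$, then collect the matrix and scalar contributions. The one step you leave implicit is the explicit evaluation of $C$; carrying it through yields $(c-\pmb{a}^T\pmb{1})^2$, which is exactly what the paper's own proof obtains (and which differs from the value of $C$ quoted in the lemma statement by the term $(\pmb{a}^T\pmb{1})^2$).
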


\begin{proof}
The proof is given in Appendix \ref{apC}. 	
\end{proof}

\begin{lemma}\label{lmd}
The quadratic equality $\pmb{x}^T\pmb{J}\pmb{x}=c$, where $\pmb{x}$ is a spin vector with elements in $\{+1,-1\}$, $\pmb{J}$ is hermitian matrix and $c>0$ is a constant, is equivalent to the binary quadratic equality $\pmb{b}^T \pmb{J}'\pmb{b}+C$ where $\pmb{b}$ is a binary vector, $\pmb{J}'=4(\Re(\pmb{J})-\frac{c}{N}\pmb{I})-4\diag((\Re(\pmb{J})-\frac{c}{N}\pmb{I})\pmb{1})$ and $C$ is a real-valued constant.   
\end{lemma}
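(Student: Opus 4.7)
The plan is to reduce this quadratic equality to a form that Lemma \ref{lma} can handle directly, by exploiting a key identity for spin variables. Since each $x_i\in\{+1,-1\}$ satisfies $x_i^2=1$, we have the constant-norm identity $\pmb{x}^T\pmb{x}=N$, or equivalently $\pmb{x}^T\pmb{I}\pmb{x}=N$. This lets us absorb the right-hand-side constant $c$ into a homogeneous quadratic form.

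The first step is to rewrite the constraint as
\begin{equation}
\pmb{x}^T\pmb{J}\pmb{x}-c = \pmb{x}^T\pmb{J}\pmb{x}-\tfrac{c}{N}\pmb{x}^T\pmb{I}\pmb{x} = \pmb{x}^T\tilde{\pmb{J}}\pmb{x} = 0,
\end{equation}
where $\tilde{\pmb{J}}=\pmb{J}-\tfrac{c}{N}\pmb{I}$ is still Hermitian, inheriting this property from $\pmb{J}$ (and since $\tfrac{c}{N}\pmb{I}$ is real symmetric). Thus the original quadratic equality is completely equivalent to a pure quadratic Ising equality with coefficient matrix $\tilde{\pmb{J}}$ and no linear term.

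The second step is to apply Lemma \ref{lma} verbatim to $\pmb{x}^T\tilde{\pmb{J}}\pmb{x}$. This directly yields an equivalent binary quadratic expression $\pmb{b}^T\pmb{J}'\pmb{b}+C$ with
\begin{equation}
\pmb{J}' = \Re\!\left(4\tilde{\pmb{J}}-4\diag(\tilde{\pmb{J}}\pmb{1})\right),
\end{equation}
and a real-valued constant $C$. Substituting $\tilde{\pmb{J}}=\pmb{J}-\tfrac{c}{N}\pmb{I}$ back in (and noting that $\tfrac{c}{N}\pmb{I}$ is real, so the $\Re(\cdot)$ passes through cleanly) yields exactly the claimed $\pmb{J}'=4\bigl(\Re(\pmb{J})-\tfrac{c}{N}\pmb{I}\bigr)-4\diag\!\bigl((\Re(\pmb{J})-\tfrac{c}{N}\pmb{I})\pmb{1}\bigr)$. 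Setting this expression equal to zero then gives the desired equivalent binary quadratic equality $\pmb{b}^T\pmb{J}'\pmb{b}+C=0$.

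There is no real obstacle here; the only subtlety is recognizing the spin identity $\pmb{x}^T\pmb{x}=N$ as the trick that converts an inhomogeneous quadratic equality into a homogeneous one, so that Lemma \ref{lma} applies without having to repeat its derivation. The bookkeeping of the constant $C$ is routine and merely tracks the constant terms generated by the substitution $x_i=2b_i-1$ inside Lemma \ref{lma}, so I would not grind through its explicit form unless required elsewhere.
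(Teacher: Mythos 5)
Your proposal is correct and follows essentially the same route as the paper's own proof in Appendix D: both replace $c$ by $\tfrac{c}{N}\pmb{x}^T\pmb{x}$ using the spin identity $\pmb{x}^T\pmb{x}=N$, homogenize the constraint into $\pmb{x}^T(\pmb{J}-\tfrac{c}{N}\pmb{I})\pmb{x}=0$, and then invoke Lemma \ref{lma} to obtain the stated $\pmb{J}'$ and constant. The only difference is that the paper also records the constant explicitly as $C=\sum_{i,j}J_{i,j}-c$, which you correctly note is routine bookkeeping.
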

\begin{proof}
	The proof is given in Appendix \ref{apD}. 	
\end{proof}

\section{RIS-aided index modulation}\label{sec3}

In this section, we present the conventional passive beamforming solution with binary phase shifts and the proposed RIS-based IM scheme. The associated QUBO problems are introduced and theoretical expressions for the average received signal power and average capacity are provided.

\subsection{Conventional beamforming}
The conventional beamforming scheme with binary phase shifts, designs the phase shift vector to maximize the SNR at the receiver. It corresponds to the solution of the following quadratic (Ising) unconstrained optimization problem 
\begin{align}
\max_{\pmb{x} \in \{+1,-1 \}^N} \pmb{x}^T \pmb{R} \pmb{x}\;=\min_{\pmb{x} \in \{+1,-1 \}^N} \pmb{x}^T (-\pmb{R}) \pmb{x}. \label{p1}
\end{align}
By using the expression in Lemma \ref{lma}, the problem in \eqref{p1} can be written as QUBO {\it i.e.,} 
\begin{align}
(P1)\;\;	&\min_{\pmb{b} \in \{0,1 \}^N} \pmb{b}^T \Re\big( -4\pmb{R}+4\diag(\pmb{R}\pmb{1})\big) \pmb{b} \nonumber \\
&=	\min_{\pmb{b} \in \{0,1 \}^N} \pmb{b}^T \pmb{Q} \pmb{b}, \label{p2}
\end{align}
where $\pmb{Q}=\Re\big(-4\pmb{R}+4\diag(\pmb{R}\pmb{1}) \big)$ and the constant terms can be ignored from the optimization. 

\subsection{IM-based beamforming}

The proposed technique combines the conventional RIS-based passive beamforming with IM. Specifically, in addition to the conventional modulation which is employed at the transmitter, we use the number (cardinality) of $+1$s in the phase vector to convey extra bits of information by using the principles of IM. One could also consider the number of $-1$s in the phase vector, which is equivalent due to the symmetry in the SNR expression {\it i.e.,} $|\pmb{g}^H\diag(\pmb{x})\pmb{h}|^2=|\!-\pmb{g}^H\diag(\pmb{x})\pmb{h}|^2$. As such, the distinct beamforming vectors can have $k \in \{0,1,\ldots,\lfloor{N/2}\rfloor\}$ phase shifts equal to $+1$. Hence, the transmitter can adjust the index $k$ according to the input information, thus conveying $\log_2(\lfloor{N/2}\rfloor+1)$ extra bits at each transmission time. To further boost the performance, the RIS vector that maximizes the SNR at the receiver and has $k$ values equal to $+1$ is selected. It is worth noting that the IM squeezes the space dimension of the passive beamforming from $2^{N-1}$ to $\binom{N}{k}$ vectors at each transmission. Therefore, even though the performance of the passive beamforming is reduced, it is compensated by the IM performance gain. The proposed IM-based design corresponds to the following constrained binary optimization problem 
\begin{align}
&\min_{\pmb{x} \in \{+1,-1 \}^N} \pmb{x}^T (-\pmb{R}) \pmb{x} \\
&\textrm{subject to}\;\;\;\; \pmb{x}^T\pmb{1}=2k-N, \label{con}
\end{align}
where the objective function is the SNR at the receiver, the constraint in \eqref{con} refers to the IM that enforces $k$ RIS elements with phase shift $+1$ and thus $N-k$ elements with phase shift $-1$ {\it i.e.,} $\pmb{x}^T\pmb{1}=\sum_{i=1}^N {x_i}=(+1)\times k+(-1)\times(N-k)=2k-N$, where $k \in \{0,1,\ldots,\lfloor{N/2}\rfloor \}$.  By applying the expressions in Lemmas \eqref{lma} and \eqref{lmc}, the above optimization problem can be converted to an equivalent binary form, where both the objective and the constraint are quadratic binary functions {\it i.e.,}
\begin{align}
&(P2)\;\;\min_{\pmb{b}\in \{0, 1\}^N} \pmb{b}^T \pmb{Q} \pmb{b} \nonumber  \\
&\textrm{s.t.}\;\;\;\;\;\;\; \pmb{b}^T \pmb{T}\pmb{b}+C=0,\;\;\;(\textrm{equivalent to}\; \pmb{b}^T\pmb{1}+C'=0), \label{co1} 
\end{align} 
where $\pmb{T}=\big(4\pmb{1}\pmb{1}^T+8(k-N)\pmb{I} \big)$, and $C=(2k-N)(2k-3N)$. We note also that the constraint in \eqref{co1} can be written in a linear binary form as $\pmb{b}^T\pmb{1}+C'=0$, where $C'=k-N$.

To be able to detect the IM information, the receiver should have a mechanism to extract the RIS beamforming vector $\pmb{x}$ (and thus the parameter $k$) from the received signal. Since the receiver can measure the received SNR and given that a global CSI is assumed, IM detection corresponds to the solution of the following quadratic equation with respect to the spin vector $\pmb{x}$ 
\begin{align}
\frac{P_t}{N_0} \pmb{x}^T\pmb{R}\pmb{x}= \Gamma_m,
\end{align}
where $\Gamma_m$ denotes the measured SNR at the receiver; in this work, we assume perfect SNR estimation ($\Gamma_m = \Gamma$). It is also worth noting that distinct RIS phase vectors correspond to distinct SNR values, thus ensuring that the equation has a unique solution\footnote{Two distinct configurations $\pmb{x} \neq \pm \pmb{x}'$ yield the same SNR if and only if  $|\pmb{v}^H \pmb{x}|^2 = |\pmb{v}^H \pmb{x}'|^2$ with $\pmb{v} = \diag(\pmb{h}^H) \pmb{g}$. For a fixed pair $(\pmb{x},\pmb{x}')$, this is one real equation in the  $2N$ real parameters of $\pmb{v}$, so the solution set is a $(2N-1)$-dimensional surface 
inside the $2N$-dimensional space. With i.i.d.\ continuous fading 
(e.g., Rayleigh), the channel vector $\pmb{v}$ is drawn from a continuous distribution, 
so the probability of lying exactly on such a surface is zero. Thus, exact SNR collisions 
occur with probability zero (i.e., almost surely), although they can arise for particular 
deterministic channels.}. The above equation can be written as a feasibility problem {\it i.e.,}
\begin{align}
&\min_{\pmb{x}\in \{+1,-1\}^N} 0 \label{op2}\\
&\textrm{s.t.}\; \pmb{x}^T \pmb{R}\pmb{x}=\Gamma_m',
\end{align}
where $\Gamma_m'=\Gamma_m N_0/P_t$. By taking into account Lemma \ref{lmd}, the feasibility problem into \eqref{op2} can be written in an equivalent binary form
\begin{align}
&(P3)\;\min_{\pmb{b}\in \{0,1\}^N} 0 \\
&\textrm{s.t.}\;\; \pmb{b}^T \pmb{W} \pmb{b}=0,
\end{align}
where $\pmb{W}=\Re\left(4[ \pmb{J}-\frac{\Gamma_m'}{N}\pmb{I}]-4\diag([\pmb{J}-\frac{\Gamma_m'}{N}\pmb{I}]\pmb{1}) \right)$. 
The feasibility problem (P3) can be converted to a simple QUBO problem (by using the penalty method that will be discussed in Section \ref{penalty_method}) {\it i.e.,}
\begin{align}
&(P4)\;\min_{\pmb{b}\in \{0,1\}^N} \pmb{b}^T \pmb{W} \pmb{b}.
\end{align}
The above problem has similar structure to (P1). Therefore, although our experimental results concerns (P1), the methodology and key observations also hold for (P4). We emphasize that, in the proposed scheme, decoding the IM index does not require demodulating the conventional symbol. The receiver only estimates the received SNR, which (under global CSI) maps uniquely to the RIS vector and thus to $k$, enabling index recovery via a feasibility-type QUBO problem. As such, the IM layer operates independently of signal demodulation and remains decodable even when the signal component is weak or unreliable. This provides a form of robust auxiliary communication, particularly useful in low-SNR or energy-constrained scenarios.

\subsection{Capacity bounds}\label{bounds}
Here, we provide theoretical bounds on the average Shannon capacity achieved by the proposed technique and compare the performance with the conventional beamforming scheme. Assuming the transmitted IM information is equiprobable, the average capacity can be written as
\begin{align}
\C &= \frac{1}{\lfloor N/2\rfloor +1}\sum_{k=0}^{\lfloor N/2\rfloor}\mathbb{E}(\log_2(1+\Gamma^*_k)) + \log_2(\lfloor N/2\rfloor+1),\label{capacity}
\end{align}
where $\Gamma^*_k$ is the maximum SNR attained with $k \in \{0,1,\ldots,\lfloor{N/2}\rfloor\}$ phase shifts equal to $+1$, given by
\begin{align}
\Gamma^*_k = \frac{P_t}{N_0} {\pmb{x}^*}^T \pmb{R} \pmb{x}^* = \frac{P_t}{N_0} \left|\sum_{i=1}^N |h_i| |g_i| x^*_i e^{j(\theta_i+\phi_i)}\right|^2,
\end{align}
with $\pmb{x}^*$ being the vector that maximizes the channel gain with $\sum_{i=1}^N x_i = 2k-N$, $\theta_i$ and $\phi_i$ are the phases of $h_i$ and $g_i$, respectively. In what follows, we assume $h_i, g_i \sim \mathcal{CN}(0,1)$ and denote by $H_k = \mathbb{E}({\pmb{x}^*}^T \pmb{R} \pmb{x}^*)$ the average channel gain. Before proceeding with the analytical evaluation of $H_k$, we first look at the conventional beamforming case with binary phase shifts. We assume that the phase shift design assigns the discrete phase shift that is nearest to the optimal continuous value, {\it i.e.,} $-(\theta_i+\phi_i)$ \cite{ZHANG}. Therefore, based on this approach, it is known that the average channel gain for the conventional beamforming is \cite{ZHANG}
\begin{align}\label{conv_bf}
H_c = N+\frac{1}{4}N(N-1).
\end{align}
In contrast, the proposed IM-based beamforming technique does not allow for all elements to set their phases as close to the optimal continuous value, which instills a loss in performance. The following proposition provides the average channel gain for this case.

\begin{proposition}\label{prop}
The average channel gain achieved by the IM-based beamforming technique with $N$ elements and $k$ phase shifts equal to $+1$ is
\begin{align}\label{IM_bf}
H_k = N + \frac{1}{4}N(N-1) - S_k,
\end{align}
where, for odd $N$, we have
\begin{align}\label{s1}
S_k = \frac{1}{2^{N-1}} \sum_{n=0}^{\lfloor N/2 \rfloor} \binom{N}{n} |k-n| (N-|k-n|),
\end{align}
whereas, for even $N$, we have
\begin{align}\label{s2}
S_k &= \frac{1}{2^{N-1}} \sum_{n=0}^{N/2-1} \binom{N}{n} |k-n| (N-|k-n|)\nonumber\\
&\quad+\frac{1}{2^N} \binom{N}{N/2} \left|k-\frac{N}{2}\right|  \left(N-\left| k-\frac{N}{2}\right|\right).
\end{align}
\end{proposition}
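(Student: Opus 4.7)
The plan is to adapt the nearest-continuous-phase analysis that produced \eqref{conv_bf} for the conventional benchmark. I would start from the expansion
\[
\pmb{x}^T\pmb{R}\pmb{x}=\sum_i|h_i|^2|g_i|^2 + 2\sum_{i<j}|h_i||g_i||h_j||g_j|\,x_ix_j\cos(\psi_i-\psi_j),
\]
with $\psi_i=\theta_i+\phi_i$ uniform on $[0,2\pi)$ and independent across $i$. The self-term always contributes $N$ in expectation, so the only quantity that differs between the conventional rule and the IM rule is the expected cross-term sum.

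The IM analogue of the nearest-continuous rule is: begin with $x_i^{(c)}=\mathrm{sign}(\cos\psi_i)$ (or its negation, using the $\pmb{x}\mapsto-\pmb{x}$ invariance of $|Z|^2$), then flip the minimum number of positions required to impose $\sum_i x_i=2k-N$. Writing $n_+$ for the number of indices with $\cos\psi_i>0$, the sign symmetry folds this to the natural count $n=\min(n_+,N-n_+)\in\{0,\ldots,\lfloor N/2\rfloor\}$ with flip count $d=|k-n|$. Since $n_+\sim\mathrm{Bin}(N,1/2)$, folding by $n_+\mapsto\min(n_+,N-n_+)$ gives exactly the weights appearing in \eqref{s1}/\eqref{s2}: $\Pr(n=j)=\binom{N}{j}/2^{N-1}$ in the generic range, plus the isolated mass $\Pr(n=N/2)=\binom{N}{N/2}/2^N$ when $N$ is even, because $n=N/2$ is its own mirror image.

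Next, writing $x_i^{\mathrm{IM}}=x_i^{(c)}(1-2y_i)$ with a flip indicator $\pmb{y}$, the core identity I would establish is
\[
\mathbb{E}\!\left[(1-2y_i)(1-2y_j)\,\big|\,d\right]=1-\frac{4d(N-d)}{N(N-1)}.
\]
A priori $\pmb{y}$ is uniform only inside a ``flip pool'' (either the $+1$'s or the $-1$'s of $x^{(c)}$) whose size depends on $n_+$, but by the exchangeability of the $\psi_i$'s the pool is itself a uniformly random subset of $[N]$ of the prescribed size; marginalising it out makes $\pmb{y}$ uniform over all $\binom{N}{d}$ Hamming-$d$ patterns, which produces the displayed formula. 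Combining with the channel expectations $\mathbb{E}[|h_i||g_i|]=\pi/4$, $\mathbb{E}[x_i^{(c)}\cos\psi_i]=2/\pi$, $\mathbb{E}[x_i^{(c)}\sin\psi_i]=0$ already used to derive \eqref{conv_bf}, each cross pair contributes $\tfrac{1}{4}(1-\tfrac{4d(N-d)}{N(N-1)})$, so $\mathbb{E}[\pmb{x}^{*T}\pmb{R}\pmb{x}^*\mid d]=H_c-d(N-d)$ with $H_c=N+\tfrac{1}{4}N(N-1)$.

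Finally, averaging over $d=|k-n|$ against the distribution obtained in the second paragraph reproduces exactly $S_k$ as given in \eqref{s1} (odd $N$) or \eqref{s2} (even $N$), and subtracting from $H_c$ yields \eqref{IM_bf}. The main obstacle is the pool-decoupling step: the flip pool is not a priori independent of the channel, so one must exploit the exchangeability of $(\psi_i)_{i=1}^N$ carefully to reduce its effect to uniform flipping over $[N]$ before the closed-form expression for $\mathbb{E}[(1-2y_i)(1-2y_j)\mid d]$ applies, and the even-$N$ self-symmetric value $n=N/2$ requires separate bookkeeping to avoid double counting.
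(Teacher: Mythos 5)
Your proposal is correct and follows essentially the same route as the paper's proof: the same direct/cross-term decomposition, the same folded-binomial distribution over $n$ with the minimal misalignment count $d=|k-n|$ (including the separate even-$N$ middle mass), and the same per-pair $\pm\tfrac{1}{4}$ contributions, merely repackaged through the flip-indicator identity $1-\tfrac{4d(N-d)}{N(N-1)}$ instead of the paper's explicit count of same-region versus cross-region pairs. The pool-decoupling assumption you flag is implicitly made in the paper's derivation as well (only the counts, not the identity of the misaligned elements, are optimized), so your argument matches the paper's level of rigor.
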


\begin{proof}
The proof is given in Appendix \ref{prop_prf}.
\end{proof}

Comparing \eqref{conv_bf} with \eqref{IM_bf}, we can observe that $S_k$ characterizes the loss in channel gain due to the constraints imposed to the RIS vector by the IM-based beamforming technique. This loss in channel gain is inversely proportional to $k$. Indeed, for $k=0$ the loss is maximized with $S_0 = \frac{1}{4}N(N-1)$, resulting in $H_0 = N$, which corresponds to a random phase shift configuration \cite{CP}; this is expected, since only one vector is available for this case. On the other hand, values of $k$ close to $\lfloor N/2 \rfloor$ provide smaller values for $S_k$. It is important to note that although the channel gain is reduced, the IM-based technique compensates by enhancing the capacity (see \eqref{capacity}). To gain further insights, we look at the ratio between $H_k$ and $H_c$ in the asymptotic regime $N\to\infty$.

\begin{corollary}
As $N\to\infty$, we have
\begin{align}
\epsilon_k = \frac{H_k}{H_c} \to 1 - \frac{4}{N^2} S_k,
\end{align}
where $S_k$ is given by \eqref{s1} or \eqref{s2}.
\end{corollary}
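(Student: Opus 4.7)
The plan is to derive the asymptotic ratio directly from the closed-form expressions established in Proposition~\ref{prop} and \eqref{conv_bf}, since both $H_k$ and $H_c$ are fully explicit. Substituting these into $\epsilon_k = H_k/H_c$ yields
\begin{align}
\epsilon_k = \frac{N + \tfrac{1}{4}N(N-1) - S_k}{N + \tfrac{1}{4}N(N-1)} = 1 - \frac{S_k}{N + \tfrac{1}{4}N(N-1)},
\end{align}
so the whole task reduces to expanding the denominator asymptotically.

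First I would rewrite the denominator as $H_c = \tfrac{1}{4}N^2 + \tfrac{3}{4}N = \tfrac{N^2}{4}\bigl(1 + 3/N\bigr)$, which clearly isolates the leading-order behavior. Then I would observe that for any fixed $k$ (or more generally as long as $S_k$ grows no faster than $N^2$, which is automatic from the trivial bound $|k-n|(N-|k-n|)\le N^2/4$ in \eqref{s1}--\eqref{s2}), the subleading $3/N$ correction in the denominator contributes only a term that vanishes relative to $4 S_k/N^2$, so
\begin{align}
\frac{S_k}{N + \tfrac{1}{4}N(N-1)} = \frac{4 S_k}{N^2}\,\bigl(1 + 3/N\bigr)^{-1} = \frac{4 S_k}{N^2} + o\!\left(\frac{S_k}{N^2}\right).
\end{align}
Substituting back gives precisely $\epsilon_k \to 1 - \tfrac{4}{N^2}S_k$, as claimed.

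There is essentially no obstacle here beyond a first-order asymptotic expansion; the corollary is a direct consequence of Proposition~\ref{prop} combined with the explicit form \eqref{conv_bf}. The only minor care point is to verify that the $\tfrac{3}{4}N$ correction in $H_c$ is negligible compared to the $\tfrac{1}{4}N^2$ term, which is immediate. Thus the proof is essentially a one-line simplification after quoting the proposition, and requires no additional combinatorial analysis of $S_k$.
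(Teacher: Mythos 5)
Your derivation is correct and matches the paper's (implicit) reasoning: the corollary is stated as an immediate consequence of Proposition~\ref{prop} and \eqref{conv_bf}, obtained exactly as you do by writing $\epsilon_k = 1 - S_k/H_c$ with $H_c = \tfrac{N^2}{4}(1+3/N)$ and noting the $3/N$ correction is asymptotically negligible (your bound $|k-n|(N-|k-n|)\le N^2/4$ guaranteeing $S_k = O(N^2)$ is the right justification). Nothing further is needed.
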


From the above, we can conclude that $\epsilon_0 \to 0$, as expected. However, we have $\epsilon_{\lfloor N/2 \rfloor} \to 1$, which shows that, asymptotically, the IM-based scheme can match the performance of the conventional scheme. We can now provide a bound on the average Shannon capacity. Using \eqref{capacity}, we have
\begin{align}
&\C \leq \frac{1}{\lfloor N/2\rfloor +1} \sum_{k=0}^{\lfloor N/2\rfloor}\log_2(1+\mathbb{E}(\Gamma^*_k)) + \log_2(\lfloor N/2\rfloor+1)\nonumber\\
&=\frac{1}{\lfloor N/2\rfloor +1}\sum_{k=0}^{\lfloor N/2\rfloor}\log_2\left(1+\frac{P_t}{N_0}H_k\right) + \log_2(\lfloor N/2\rfloor+1),\label{cap}
\end{align}
which follows with the use of Jensen's inequality \cite[Ch. 2.6]{COV} and $H_k$ is given in Proposition \ref{prop}.

\subsection{A toy example}

We present a simple/toy numerical example to demonstrate the basic characteristics of the proposed scheme. We consider an RIS with $N=5$ elements, $P_t=1$, $N_0=1$ and constant wireless channels. Table \ref{table1} presents the RIS vectors for each index ($k\in \{0,1,2 \}\}$ since $\lfloor 5/2 \rfloor=2$) and the associated SNRs.

The conventional beamforming scheme selects the RIS vector that achieves the maximum SNR. Specifically, among the $2^{N}/2=2^{N-1}=2^{4}=16$ distinct vectors (due to the symmetry in the SNR expression) that are placed in the three columns of Table \ref{table1}, the conventional design will select the entry $\{+1,-1,-1,+1,-1\}$ with SNR equal to $1.584$; in this case the achieved Shannon capacity becomes $\mathcal{C}=\log_2(1+\textrm{SNR})=\log_2(1+1.584)=1.37$ bits per channel use (bpcu). 

On the other hand, the proposed IM scheme (at each transmission time), selects the RIS vector with the maximum SNR among the vectors of a specific column {\it e.g.,} for $k=1$, it selects the vector $\{-1,-1,-1,+1,-1\}$ with SNR $1.57$. If the IM data are equiprobable, the achieved average Shannon capacity can be written as $\mathcal{C}=\mathbb{E}(\log_2(1+\textrm{SNR}))+\log_2(3)=\frac{1}{3}\log_2(1+0.279)+\frac{1}{3}\log_2(1+1.57)+\frac{1}{3}\log_2(1+1.584)+\log_2(3)=2.6138$ bpcu. 

This toy example demonstrates the performance benefits of the proposed technique. We note that while the current framework assumes ideal SNR estimation at the receiver, in practical settings small differences in SNR values across candidate RIS configurations (as observed in Table \ref{table1}) may lead to index detection errors. Investigating the robustness of the IM scheme under SNR fluctuations and developing resilient detection strategies is an important direction for future work.

\begin{table}[t]
\caption{IM-based RIS vectors and associated SNR values; setup with $N=5$, $P_t=1$, $N_0=1$, wireless channels with  $\pmb{h}=[-0.048+0.0364j, -0.138+0.584j, -0.153+1.079j, -0.2143+0.3302j, 0.0163-0.1483j]$, and $\pmb{g}=[0.4421+0.0956j, 0.1296+0.3643j, -0.7282+0.1848j, 0.6712-0.6657j, 0.2171-0.1148j]$.}
\label{table1}
\centering
\resizebox{\columnwidth}{!}{
\begin{tabular}{|c|c|c|c|}
\hline
IM & $k=0$ & $k=1$ & $k=2$ \\ 
\hline
\hline
\multirow{9}{*}{\shortstack{RIS \\ vector \\ \& \\ SNR}} & \multirow{9}{*}{\shortstack{\bm{$\{-1,-1,-1,-1,-1\}$} \\ \bm{$0.279$}}}  & $\{-1,-1,-1,-1,+1\}$, $0.208$      &  {$\{+1,+1,-1,-1,-1\}$}, $0.324$   \\ 
\cline{3-4}
&     & \bm{$\{-1,-1,-1,+1,-1\}$}, \bm{$1.57$}      & $\{+1,-1,+1,-1,-1\}$,  $1.346$   \\ 
\cline{3-4}
&     & $\{-1,-1,+1,-1,-1\}$, $1.407$    & \bm{$\{+1,-1,-1,+1,-1\}$},  \bm{$1.584$}    \\ 
\cline{3-4}
&     & $\{-1,+1,-1,-1,-1\}$, $0.281$   & {$\{+1,-1,-1,-1,+1\}$},  $0.203$    \\ 
\cline{3-4}
&    & $\{+1,-1,-1,-1,-1\}$, $0.274$   & {$\{-1,+1,+1,-1,-1\}$},  $1.405$     \\ 
\cline{3-4}
&    &    & {$\{-1,+1,-1,+1,-1\}$},  $1.506$    \\ 
\cline{4-4}
&    &     & {$\{-1,+1,-1,-1,+1\}$},  $0.228$    \\ 
\cline{4-4}
&    &     & {$\{-1,-1,+1,+1,-1\}$},  $0.271$    \\ 
\cline{4-4}
&    &    & {$\{-1,-1,+1,-1,+1\}$}, $1.568$   \\ 
\cline{4-4}
&    &   & {$\{-1,-1,-1,+1,+1\}$},  $1.392$  \\ 
\hline
\end{tabular}}
\end{table}

\section{Optimal RIS design and D-WAVE heuristic}\label{sec4}

In this section, we deal with the solution of the optimization problems (P1)/(P4), and (P2). We note that (P1)/(P4) correspond to conventional QUBO problems, while (P2) is a constrained binary optimization problem where both the objective and the equality constraint are quadratic binary functions.

\subsection{QUBO/Ising problem and D-WAVE heuristic}\label{dw}

The problems (P1)/(P4) are conventional QUBO problems while the optimization problem in (P2) can be converted to appropriate QUBO form, as it will be presented in the following discussion. Therefore, the solution of a QUBO problem is vital for the proposed RIS-aided design schemes. In this work, we adopt QA to solve the QUBO problems considered and specifically the D-WAVE QA device that offers remote-based access (leap platform) to real-world D-WAVE quantum computers and appropriate source development tools (Python SKD Ocean) \cite{WAVE}.
 
QA is a specialized method of quantum computing aimed at solving QUBO/Ising problems by leveraging quantum mechanical principles (adiabatic theorem). Specifically, at the heart of QA process is the concept of \textit{adiabatic evolution}, a process where a quantum system remains in its ground state (the lowest quantum energy state) as it gradually evolves from an initial Hamiltonian $H_{\text{initial}}$ (representing an easy-to-solve problem) to a final Hamiltonian $H_{\text{final}}$ that represents the problem to be solved \cite{MCG,YAR}. If we consider the Ising problem in \eqref{isi1}, the QA process can be described as 
\begin{align}
H(t) = -\frac{A(t)}{2} H_{\text{initial}} + \frac{B(t)}{2} H_{\text{final}},
\end{align}
where $t \in [0, 1]$ denotes the anneal fraction, $A(t)$ (monotonic decreasing function) and $B(t)$ (monotonic increasing function) represent the anneal schedules (adiabatic path) which relay on the quantum processor \cite{ADI}, $H_{\text{initial}}=\sum_i \sigma_x^{(i)}$ , and $H_{\text{final}}=\sum_i u_i\sigma_z^{(i)}+\sum_{i<j}J_{i,j}\sigma_z^{(i)}\sigma_z^{(j)}$; $\sigma_x^{(i)}$ and $\sigma_z^{(i)}$ are $x$ and $z$ Pauli matrices acting on qubit $i$, respectively. 

D-WAVE (based in British Columbia, Canada) has developed a series of quantum processors designed to implement QA for real-world problems \cite{WAVE}; these processors are noisy versions of the ideal adiabatic evolution. The {\it Advantage\_System6.4} \cite{qpu} is the latest D-WAVE quantum processor that features over $5,500$ flux qubits, an increase from earlier quantum devices, making it one of the largest commercial QA available. These qubits are arranged in a Pegasus hardware topology (each qubit is connected to $15$ other qubits), a high-connectivity hardware graph structure designed to allow for more flexible and efficient embedding of optimization problems. The device has limited arithmetic resolution {\it i.e.,} the linear terms ($u_i$) and quadratic terms ($J_{i,j}$) are quantized in the interval $[-4,4]$ and $[-2,1]$, respectively.

The mapping of a physical QUBO problem (which represents the quadratic interconnection of binary variables) into the limited D-WAVE QA hardware topology/graph, is called minor embedding; it is an NP-hard problem and is mainly solved by using various heuristics. Since the hardware graph is not fully connected, minor embedding enables logical channeling between physical qubits to represent logical variables/qubits. The logical channeling is characterized by the strength of the logical links (called channel strength or ferromagnetic coupling) and it is a critical parameter for QA performance. In case a logical chain is broken at the end of QA process ({\it i.e.,} physical qubits that form a logical qubit have different final values), appropriate consensus algorithm is applied. 

Due to practical non-idealities ({\it e.g.,} hardware limitations, Hamiltonian noise, temperature fluctuations, etc.), the output of a single D-WAVE run (referred to as an anneal) is probabilistic and may be different than the ground state of $H_{\text{final}}$. To ensure an efficient solution for the considered QUBO problem, it is a common practice to solve the same QUBO instance multiple times; the best solution among all the anneals is the final D-WAVE QA output. The anneal time and the number of anneals are critical design parameters which are tuned empirically.

\begin{algorithm}[t]
\caption{Iterative penalty method to solve the problem in \eqref{opt2}. }\label{alg1}
\begin{algorithmic}[1]
\Require Initial $\mu$ and the increasing factor $\Delta \mu$, maximum number of iteration $i_{\text{max}}$; let $g(\pmb{b})=\pmb{b}^T\pmb{Q}\pmb{b}$.   
\Ensure Binary vector $\pmb{b}^*$ that solves the problem in \eqref{opt2}.
\For{$i \gets 1$ to $i_{\textrm{max}}$}
\State Solve \eqref{opt2} using D-WAVE  with $\pmb{b}\leftarrow\arg_{\pmb{b}} \min f(\pmb{b})$ and $f(\pmb{b})=\pmb{b}^T (\pmb{Q}+(\mu/2)\pmb{T})\pmb{b}$. 
\If{vector $\pmb{b}$ is feasible and $g(\pmb{b})<g(\pmb{b}^*)$ } \\ \;\;\;\;\;\;\;\;$\pmb{b}^*\leftarrow \pmb{b}$
\EndIf
\State Update $\mu\leftarrow \mu (\Delta\mu)$
\EndFor
\end{algorithmic}
\end{algorithm}

\subsection{Constrained optimization: Penalty method}\label{penalty_method}

The D-WAVE solver as well as all the existing commercial QUBO heuristics are not able to handle equality/inequality constraints directly. The most classical method to handle binary quadratic constraint is the penalty method, where the constraints are embedded in the objective function through penalty parameters \cite{YAR,KRI1}. Specifically, the constrained binary optimization problem in (P2), can be converted to the following QUBO form
\begin{align}
	&\min_{\pmb{b}\in \{0,1 \}}\pmb{b}^T\pmb{Q}\pmb{b}+\frac{\mu}{2} \pmb{b}^T \pmb{T}\pmb{b}, \nonumber \\
	&=\min_{\pmb{b}\in \{0,1 \}}\pmb{b}^T \left(\pmb{Q}+\frac{\mu}{2}\pmb{T} \right)\pmb{b}, \label{opt2}
\end{align}
where $\mu>0$ is the penalty parameter and the constant $C$ can be removed since it does not affect the solution. The above optimization problem is QUBO and can be solved using the D-WAVE QA device, as described in Section \ref{dw}. The parameter $\mu$ is critical for the performance of the penalty method; as $\mu$ increases, we enforce feasibility, but the quality of the solution becomes worse (and vice versa). It is also worth noting that large values $\mu$ are more sensitive to hardware limitations in coefficient resolution, which could result in further performance degradation. The optimal parameter $\mu$ can be tuned empirically through exhaustive numerical studies. To make this process more rigorous, we introduce Algorithm \ref{alg1}, where the parameter $\mu$ gradually increases while a QUBO problem is solved at each iteration. The best feasible solution over all iterations is the final solution of the penalty method.

\begin{algorithm}[t]
\caption{Iterative AL method to solve the problem in  \eqref{opt2}. }\label{alg2}
\begin{algorithmic}[1]
\Require \mbox{Initial} $\lambda,\mu$ \mbox{and the increasing factor} $\rho$.
\Ensure Binary vector $\pmb{b}^*$ that solves the problem in \eqref{opt2}.
\Repeat
\State Construct AL function $f(\pmb{b},\lambda,\mu)  =\pmb{b}^T(\pmb{Q}+\lambda\diag(\pmb{1})+(\mu/2) \pmb{T})\pmb{b}$.
\State Solve $\pmb{b} \leftarrow \arg\min_{\pmb{b}}f(\pmb{b},\lambda,\mu)$ using D-WAVE.
\If{($\pmb{b}^T\pmb{1}+C'>0$)}  update $\lambda\leftarrow \lambda+\mu(\pmb{b}^T\pmb{1}+C')$
\EndIf
\State Update $\mu = \rho\mu$.
\If{vector $\pmb{b}$ is feasible and $g(\pmb{b})<g(\pmb{b}^*)$ } \\ \;\;\;\;\;\;\;\;$\pmb{b}^*\leftarrow \pmb{b}$.
	\EndIf
\Until{stopping criteria are satisfied.}
\end{algorithmic}
\end{algorithm}

\subsection{Constrained optimization: Augmented Lagrangian method}   

The key weakness of the penalty method is that the optimal penalty parameter $\mu$ cannot be tuned automatically and requires exhaustive numerical studies. In addition, its value (if it is large) is very sensitive to the hardware arithmetic resolution/precision of the D-WAVE solver. On the other hand, the AL method combines Lagrangian optimization with the above penalty method in order to converge faster to the optimal solution while it avoids large values of the penalty parameter \cite{DJI}. In addition, it is associated with an iterative algorithm that allows to tune the auxiliary variables (Lagrangian multipliers and penalty parameters) automatically, through a rigorous iterative process. Based on the AL method, the optimization problem in (P2) can be written as 
\begin{align}
&\min_{\pmb{b}\in \{0,1 \}}\pmb{b}^T\pmb{Q}\pmb{b}+\lambda (\pmb{b}^T\pmb{1}+C')+\frac{\mu}{2} (\pmb{b}^T \pmb{T}\pmb{b}+C) \label{al} \\
&=\min_{\pmb{b}\in \{0,1 \}}\pmb{b}^T\pmb{Q}\pmb{b}+\lambda \pmb{b}^T\pmb{1}+\frac{\mu}{2}\pmb{b}^T \pmb{T}\pmb{b}, \nonumber  \\
&=\min_{\pmb{b}\in \{0,1 \}}\pmb{b}^T \left( \pmb{Q}+\lambda \diag(\pmb{1})+\frac{\mu}{2}\pmb{T}\right) \pmb{b}, \label{al1}
\end{align}   
where the second (linear) term in \eqref{al} refers to the Lagrangian multipliers method, the third (quadratic) term refers to the penalty method, and $\lambda$ is the Lagrangian multiplier; the constant terms can be ignored when solving the optimization problem. The above optimization problem can be solved iteratively, where a QUBO problem (given in \eqref{al1}) is solved at each iteration. The parameters $\lambda$ and $\mu$ are updated at each iteration with $\lambda \leftarrow \lambda+\mu (\pmb{b}^T\pmb{1}+C')$ and $\mu\leftarrow \rho \mu$ where $\rho>1$ is the increase factor (typically $\rho=1.1$ \cite{TAN}). Due to hardware limitations, the D-WAVE solver may occasionally return infeasible solutions. At each iteration, we consider the best feasible solution returned by the QUBO solver; if no feasible solution is found, we proceed with the best available (infeasible) one. The returned solution is used without post-processing, allowing us to directly assess the annealer's performance within the AL framework. Thanks to the iterative nature of the algorithm, occasional infeasibility does not significantly affect convergence. Exploring feasibility-enforcing variants or lightweight correction mechanisms is left for future work. The iterative AL algorithm is presented in Algorithm \ref{alg2}; the algorithm repeats until some formal termination criteria are met {\it e.g., } based on the number of iterations, time, convergence etc. 
Note that the local convergence of the AL method has been established when the objective and constraint functions are twice continuously differentiable, which is the case for the problem (P2), in \cite{AL-converge}.

\subsection{Performance benchmarks}

\subsubsection{Exhaustive search}

The exhaustive search (ES) can solve the optimization problems considered by computing the SNR objective function over all the possible RIS vectors; the vector that maximizes SNR (and/or satisfies the equality constraint in the case of (P2) is the solution of the optimization problem. Due to the symmetry of the SNR, the ES algorithm requires $2^{N}/2=2^{N-1}$ computations, and therefore its complexity is exponential with the number of RIS reflected elements. The ES can be applied for small number of RIS elements, while it becomes impractical for large configurations; it is used as a benchmark for the scenarios where it can be applied.  

\subsubsection{Simulated annealing}
Simulated annealing (SA) is an alternative classical method to solve QUBO problems. For our results, we use the the SA scheme of D-WAVE which is embedded in the Ocean SKD tool; SA operates locally on personal computers without necessitating data transmission to the D-WAVE hardware. For comparison, we employ SA in Step $2$ of Algorithm \ref{alg2} to implement the iterative AL algorithm and solve the problem in \eqref{opt2}. We set the parameters as $\lambda=2.1, \mu=2,\rho=1.1$, and the maximum number of iterations is $50$. This algorithm can be used as an approximate optimal solution when the above ES can not be applied (RIS configurations with a large number of elements).

\subsubsection{Random selection}

A trivial scheme that does not require any complicated computation is random selection. In this scheme, we generate $M$ random RIS vectors and select the best feasible vector (if it exists), that is, the vector that satisfies the IM parameter $k$ and returns the maximum SNR value. To have a fair comparison with the D-WAVE solver, $M$ is taken equal to the number of D-WAVE anneals.   

\subsection{Complexity}

Both the penalty (Algorithm \ref{alg1}) and the AL (Algorithm \ref{alg2}) algorithms are iterative and therefore their complexity is equal to the number of iterations multiplied with the complexity of D-WAVE/SA for a single iteration. Theoretically, within each iteration, the complexity of QA is $\mathcal{O}(e^{\sqrt{N}})$, while the complexity of SA is $\mathcal{O}(e^N)$ \cite{complexity}, where $N$ is the size of the QUBO problem. However, in practice, unlike the D-WAVE QA solver, SA is not affected by the quantum hardware topology and numerical resolution. It is also worth noting that in QA, instead of analyzing the computation time/complexity of a given algorithm, we mainly study the trade-off between the time and the probability that the QA output is correct. Therefore, in our D-WAVE QA experimental studies, the number of anneals as well as the duration of each anneal do not scale with the size of the problem and remain constant. Energy consumption is not explicitly analyzed in this work, as it depends on hardware-specific factors such as qubit count, annealing schedule, and embedding strategies, which are not accessible through the current D-WAVE interface. However, future D-WAVE systems are projected to offer notable power efficiency improvements over classical hardware in various wireless communication scenarios \cite{KASI}.

\section{Numerical results}\label{sec5}

Simulation and experimental results are carried-out to evaluate the performance of the proposed RIS-aided IM scheme. 

For the experimental results, we focus on four case studies that correspond to specific system parameters. Specifically, we consider four indicative RIS scenarios with $N=\{10, 25, 50, 100\}$. Note that the elements of the channel vectors $\pmb{h}, \pmb{g}$ are samples from a complex Gaussian distribution with zero mean and variance one. Fig. \ref{chann_mat} visualizes the (complex) matrices $\pmb{R}$ for the four scenarios considered. As for the IM parameter $k$, we consider $k=\{3,10,20,20\}$ for $N=\{10,25,50,100\}$, respectively, without loss of generality. The selected case studies are sufficient to demonstrate the effectiveness of the D-Wave quantum annealer in addressing the RIS design problem under consideration. Specifically, our experimental setup includes four representative, fully dense QUBO instances comprising $10$, $25$, $50$, and $100$ binary variables. These instances allow us to systematically assess the performance and scalability of the D-Wave system as the problem size increases. We note that our approach is general and can, in principle, be extended to larger RIS configurations. However, due to current quantum hardware limitations (including restricted access to the D-WAVE solver), we limit our experiments to RIS sizes up to $100$ elements.

\begin{figure}[t]
\includegraphics[width=1\columnwidth]{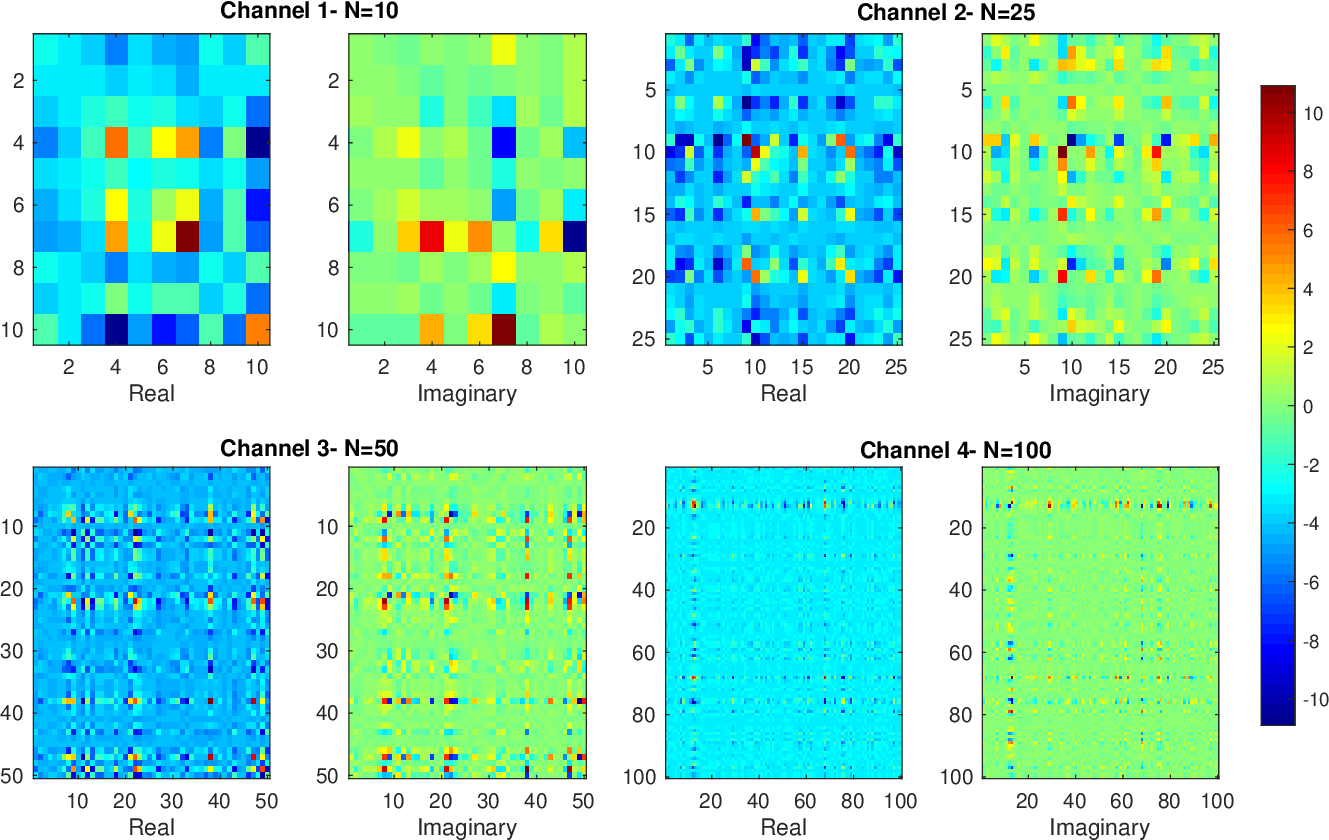}
\vspace{-0.4cm}
\caption{Matrix $\pmb{R}$ for the four considered cases studies with $N=\{10, 25, 50, 100\}$.}\label{chann_mat}
\end{figure}

\subsection{Proposed IM-based RIS design vs conventional RIS scheme}

In our first numerical results, we focus on the comparison between the proposed IM-based scheme and the conventional RIS design. Table \ref{table2} shows the average capacity performance (over $10,000$ channel realizations) for small RIS configurations where ES can be applied to solve (P1) and (P2). The key observation is that the proposed IM-based scheme significantly outperforms the conventional RIS design and the performance gain increases as the number of elements increases.  

For larger RIS configurations, we apply the SA algorithm with AL (Algorithm \ref{alg2}) for RIS vector optimization for both schemes; we consider the channel coefficients given in Fig. \ref{chann_mat}. Specifically, Fig. \ref{av1}(left) shows the normalized SNR ($\triangleq \pmb{x}^T\pmb{R}\pmb{x}$) performance for all values of the IM parameter ($k=0,\ldots, \lfloor N/2\rfloor$) as well as the SNR associated with the conventional RIS design (dashed line). An interesting remark is that the SNR performance of the IM-based scheme is mainly improved as $k$ increases. The justification is that the function $\binom{N}{k}$ increases monotonically (more combinations for larger $k$) and therefore the optimal RIS vector that maximizes the SNR is more probable to correspond to higher $k$. The results (on the right of Fig. \ref{av1}) show the capacity performance for a normalized set-up (with $P_t=1$ and $N_0=1$) for the four RIS configurations considered. The results are inline with our previous observations in Table \ref{table2}. The proposed IM-based scheme outperforms the conventional design and the performance gain increases as $N$ increases; the capacity gain increases from $2$ to $5$ bits per channel use for an RIS with $25$ and $100$ elements, respectively.

\begin{table}[t]
\caption{Average capacity performance (in bpcu) \\for IM-based and conventional RIS schemes;\\ RIS setup with $N=\{4, 6, 10, 14 \}$, $P_t=1$, $N_0=1$ and ES.}
\label{table2}
\centering
\resizebox{\columnwidth}{!}{
\begin{tabular}{|c||c|c|c|c|}
\hline
RIS scheme & $N=4$ & $N=6$ & $N=10$ & $N=14$ \\
\hline
\hline
IM-based  & $7.034$ & $8.558$ & $10.507$ & $11.821$  \\
\hline
Conventional  & $5.988$ & $7.086$ & $8.482$ & $9.408$   \\
\hline
\end{tabular}}
\end{table}

\begin{figure}[t]
\includegraphics[width=1\columnwidth]{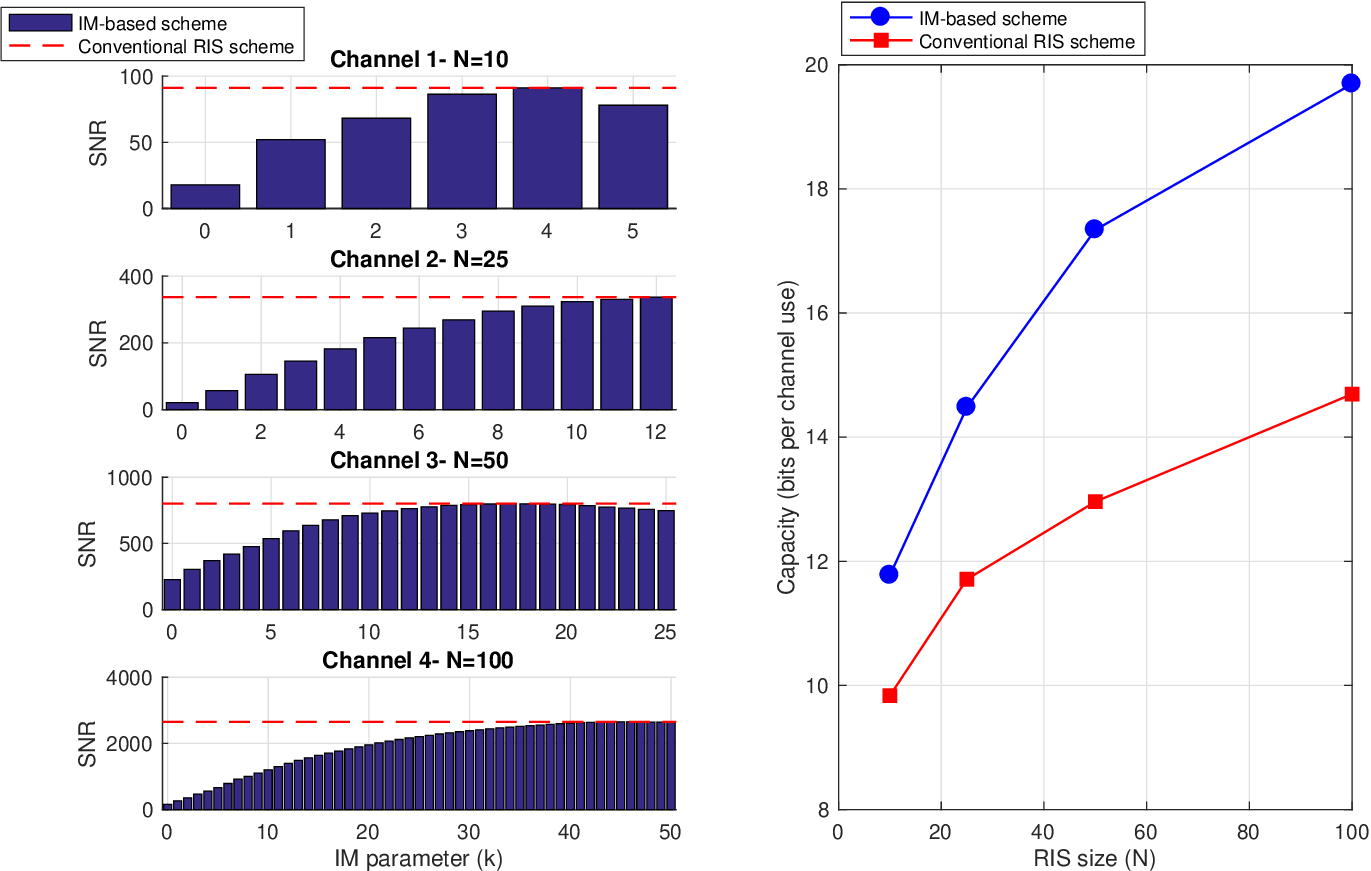}
\vspace{-0.4cm}
\caption{(left) Normalized SNR performance of the proposed IM-based RIS schreme versus the IM parameter $k$ for the four indicative RIS scenarios; the ``dashed line'' represents the conventional RIS scheme. (right) Capacity performance for both the IM-based and conventional RIS schemes for the four indicative RIS scenarios; $P_t=1$ and $N_0=1$.}\label{av1}
\end{figure}

\begin{table}[t]
\caption{Main parameters for the D-WAVE experiments}
\label{table3}
\centering
\begin{tabular}{|l|c|}
\hline
Number of anneals& $1,000$\\\hline
Anneal time & $1\mu$sec\\\hline
\makecell[l]{Channel strength\\ (ferromagnetic coupling parameter)}& $3$ \cite{KRI1}\\\hline
\end{tabular}\\\vspace{2mm}

\textbf{Penalty method}\\[1mm]
\begin{tabular}{|c|c|}
\hline
$\Delta\mu$ & $1.5$\\\hline
Maximum number of iterations & $20$\\
\hline
\end{tabular}\\\vspace{2mm}

\textbf{AL method}\\[1mm]
\begin{tabular}{|c|c|}
\hline
$\mu$ & $2$\\\hline
$\lambda$ & $2.1$\\\hline
$\rho$ & $1.1$\\\hline
Maximum number of iterations & $20$\\
\hline
\end{tabular}
\end{table}

\subsection{D-WAVE experimental results}
For our D-WAVE QA experiments, we use the D-WAVE Leap interface with the {\it Advantage System 6.4} quantum processing unit \cite{WAVE}. For each QUBO instance, we use the parameters listed in Table \ref{table3}. For minor embedding, we adopt the heuristic algorithm {\it minorminer} which is included in the Ocean SDK by default; in this case, a majority vote is applied to broken chains. Due to the probabilistic nature of the QA process, the D-WAVE solver may return several distinct solutions; the returned solutions are ordered in descending order of their objective function. In addition, we consider the probability of occurrence of each distinct solution (over the total number of anneals) and the probability of feasibility, which is the probability to obtain a feasible solution over all the anneals. All experimental results assume $P_t$ and $N_0 = 1$.

\begin{figure}[t]
\includegraphics[width=\columnwidth]{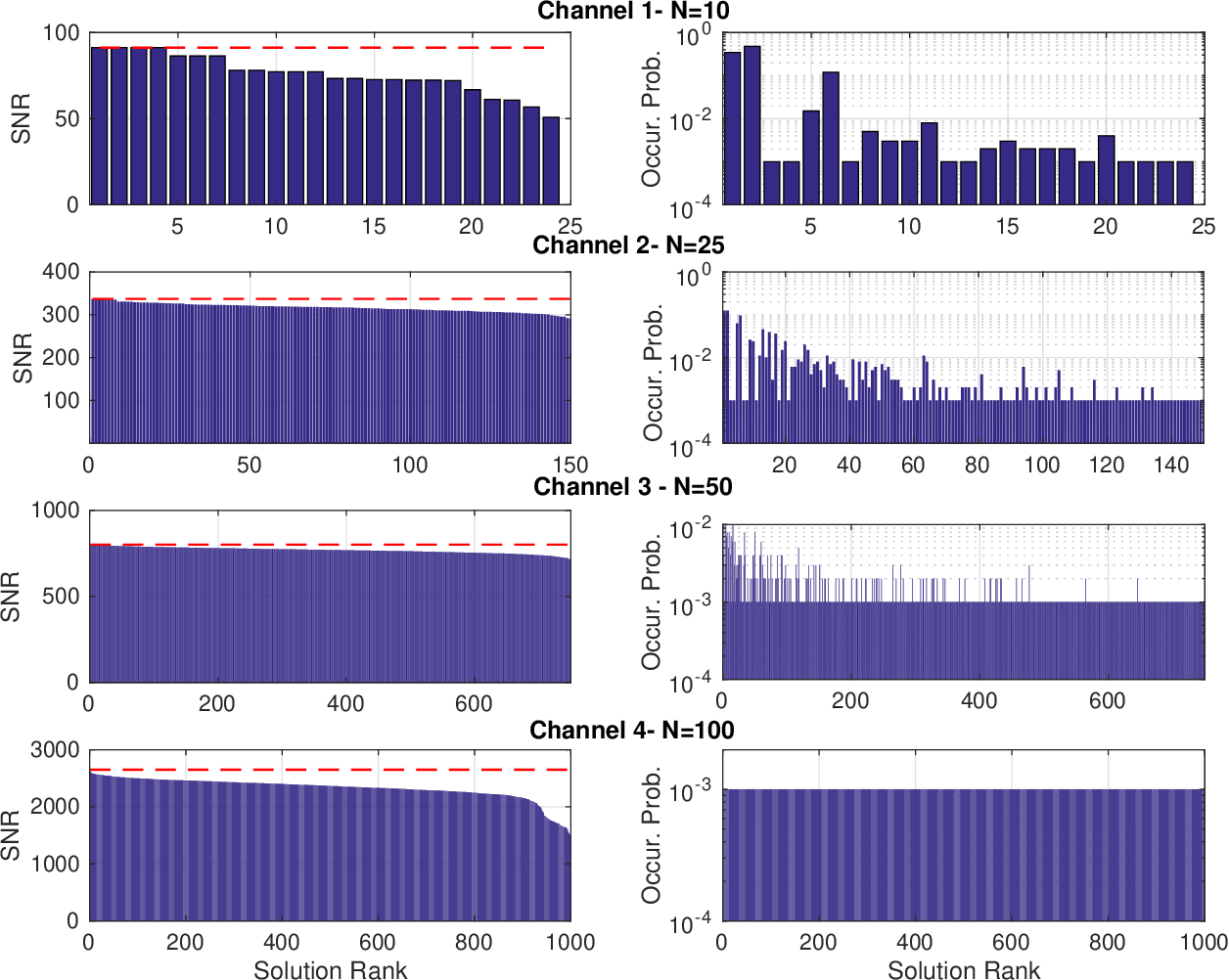}
\vspace{-0.4cm}
\caption{D-WAVE performance for the conventional RIS design with different numbers of RIS elements. (left) SNR for the ordered distinct solutions; "dashed line" represents the ES performance (for $N=10$) or SA performance (for $N>10$). (right) Probability of occurrence for the ordered distinct solutions.}\label{fig1}
\end{figure}

\begin{figure}[t]
\includegraphics[width=\columnwidth]{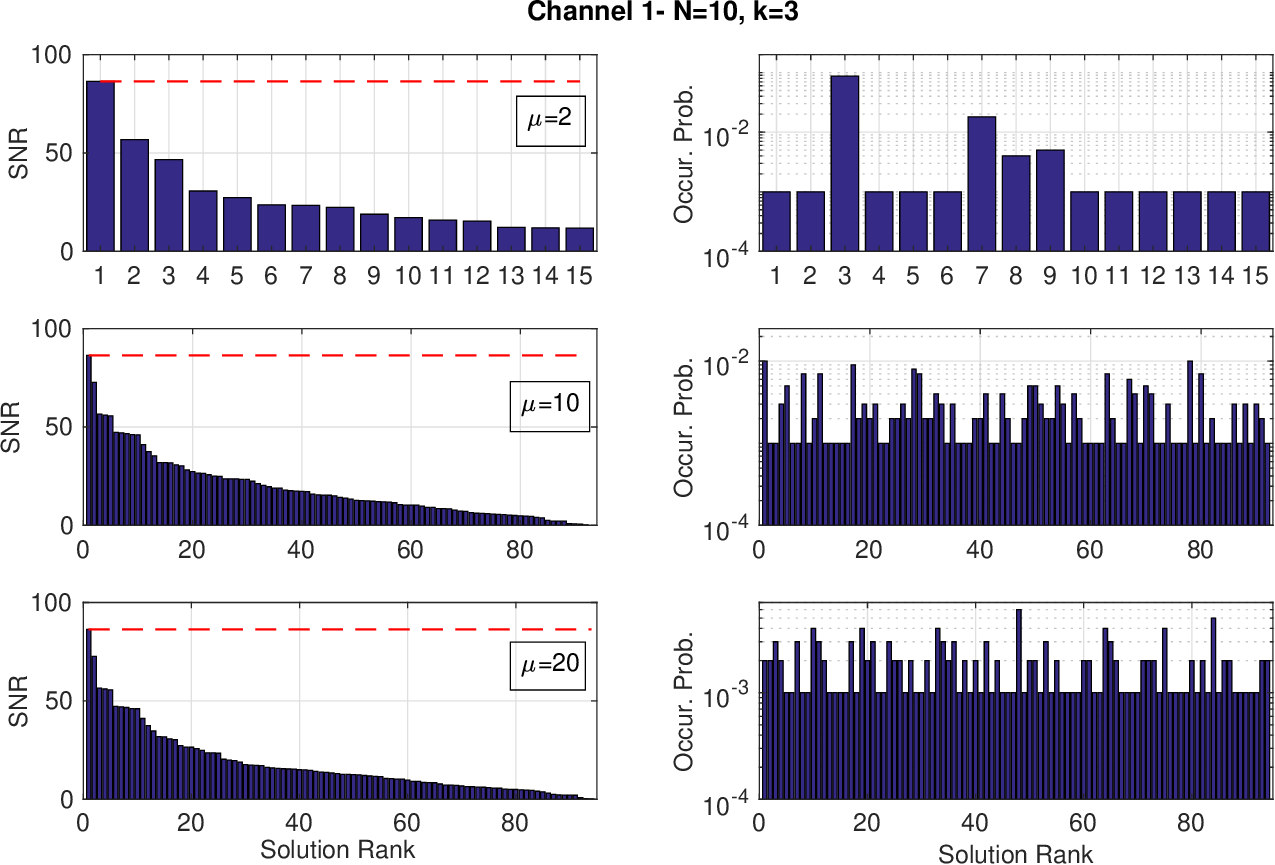}
\vspace{-0.4cm}
\caption{D-WAVE performance for the IM-based RIS design with the penalty method; setup with $N=10$, $k=3$ and $\mu=\{2, 10, 20 \}$. (left) SNR for the ordered feasible distinct solutions; "dashed line" represents the ES performance. (right) Probability of occurrence for the ordered feasible distinct solutions.}\label{fig2}
\end{figure}

\subsubsection{Conventional RIS design}
Fig. \ref{fig1} shows the performance of the D-WAVE solver for the four channels considered, when a conventional RIS optimization is applied. The first key observation is that the D-WAVE solver returns a close-to-optimal solution for all cases, which makes D-WAVE heuristic highly efficient for the combinatorial problem considered. A comparison between the four channels shows that the quality of the solutions reduces with the number of RIS elements. Specifically, we observe that the number of distinct solutions increases (for example, we have $24$ solutions for $N=10$, and $755$ solutions for $N=50$), and the probability of the best solution decreases as $N$ increases. It is worth noting that the mass distribution function of the returned solutions becomes symmetric for the scenario with $N=100$. 

\subsubsection{RIS-aided IM design - Penalty method}
Fig. \ref{fig2} deals with the solution of (P2) by using the penalty method for the simple scenario with $N=10$ and $k=3$. We plot the SNR performance of the feasible solutions ordered by D-WAVE and the associated probability of occurrence for three indicative penalty values $\mu=\{2, 10, 20\}$. We can see that the D-WAVE returns a close-to-optimal solution (similar to ES) for all penalty values. A more careful observation shows that as the penalty parameter increases, D-WAVE returns more feasible solutions (more notably from $\mu = 2$ to $\mu = 10$); this has been expected since a higher $\mu$ enforces feasibility. For $\mu=10$, D-WAVE returns almost $85$ feasible solutions, while the best solution has the highest probability of occurrence $\approx 0.01$. In addition, we observe that when $\mu=20$ although it further enforces feasibility, it decreases the quality of the solutions (the constraint penalty begins to dominate the objective, reducing the relative impact of the (negated) SNR term in the QUBO minimization.), {\it i.e.,} the probability of the occurrence of the best solution reduces to $0.002$.    

Fig. \ref{fig3} focuses on the more complex case studies with ($N=25$, $k=10$), ($N=50$, $k=20$), ($N=100$, $k=20$) when the iterative Algorithm \ref{alg1} is adopted. For the iterative algorithm, we use an initial penalty value $\mu=2$ and a maximum number of iterations equal to $20$; the penalty parameter is updated according to $\mu\leftarrow \mu (\Delta\mu)$ with $\Delta\mu=1.5$. The subfigures show the SNR of the best feasible solution, the average SNR for all feasible solutions, the probability of occurrence of the best solution, as well as the total probability of feasibility at each iteration. The first remark is that the D-WAVE solver with the penalty method can compute the optimal solution for $N=25$; $2\sim 3$ iterations seem to be sufficient to achieve a near-optimal performance and to maximize the probability of the occurrence of the best feasible solution. On the other hand, we observe that as $N$ increases, the quality of the solutions is significantly reduced. For $N=50$, D-WAVE returns a good solution in the first iteration, but its performance rapidly vanishes with the number of iterations. The main reason is that due to the limited hardware resolution of the D-WAVE solver, higher values of $\mu$ instead of enhancing the feasibility, reduce the quality of the returned solution and the feasibility. This issue becomes more critical for the scenario with $N=100$; in this case, the D-WAVE solver returns feasible solutions only for six penalty values while their quality is quite poor. Random selection provides a poor performance while it has feasibility issues as $N$ increases. Overall, the key message of this figure is that D-WAVE and penalty method seems to be efficient for low/intermediate RIS configurations while for high values of $N$, it becomes less efficient due to hardware limitations.

\begin{figure}[t]
\includegraphics[width=\columnwidth]{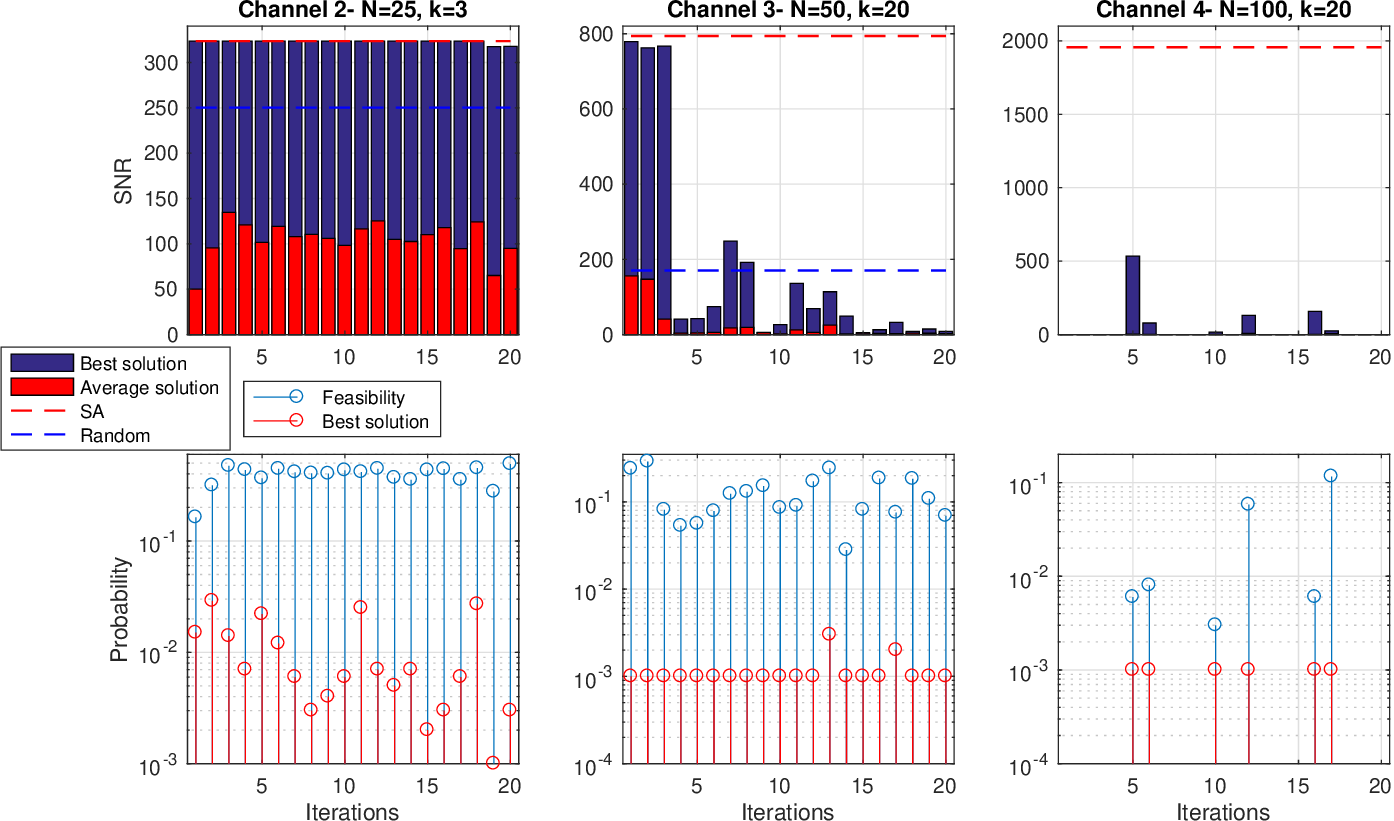}
\vspace{-0.4cm}
\caption{D-WAVE performance for the IM-based RIS design with the iterative penalty method; setup with ($N=25$, $k=3$), ($N=50$, $k=20$), ($N=100$, $k=20$), initial penalty $\mu=2$, $\Delta \mu=1.5$ and $20$ iterations. (top) SNR for the best feasible solution at each iteration- no results means that no feasible solution found; "(red) dashed line" represents the SA benchmark, "(blue) dashed line" represents the random selection benchmark. (bottom) Probability of occurrence for the best feasible solution and feasibility probability at each iteration.}\label{fig3}
\end{figure}

\subsubsection{RIS-aided IM design - AL method}
Fig. \ref{fig4} deals with the proposed iterative AL method for the solution of the problem (P2). For the AL, we consider the initial values $\mu=2$, $\lambda=2.1$, $\rho=1.1$, minimum number of iterations $5$, maximum number of iterations $20$, while the algorithm terminates when the change rate for three consecutive solutions is lower than $0.001$ or a maximum number of iterations is achieved. We consider the four indicative channel models and plot the performance of the D-WAVE solver in terms of SNR performance and occurrence probability for the best feasible solution (if exists) at each iteration. The first remark is that the D-WAVE solver with AL returns a nearly optimal solution for $N=\{10,25,50\}$ and a better solution than the penalty method for $N=100$. In addition, the average SNR performance of the returned solutions and the feasibility probability are improved compared to the penalty method as $N$ increases. The justification of this observation is that the AL method searches the solution space in a more flexible way (by using two free variables {\it i.e.,} $\lambda$ and $\mu$), while the free variables are increased in a more smooth way, which is more friendly to the D-WAVE limited hardware resolution than the penalty method. Although the quality of the solution decreases as the RIS elements increase (similar to the penalty method), the AL method is more efficient in integrating binary constraints into the D-WAVE QA device.

\begin{figure}
\includegraphics[width=\columnwidth]{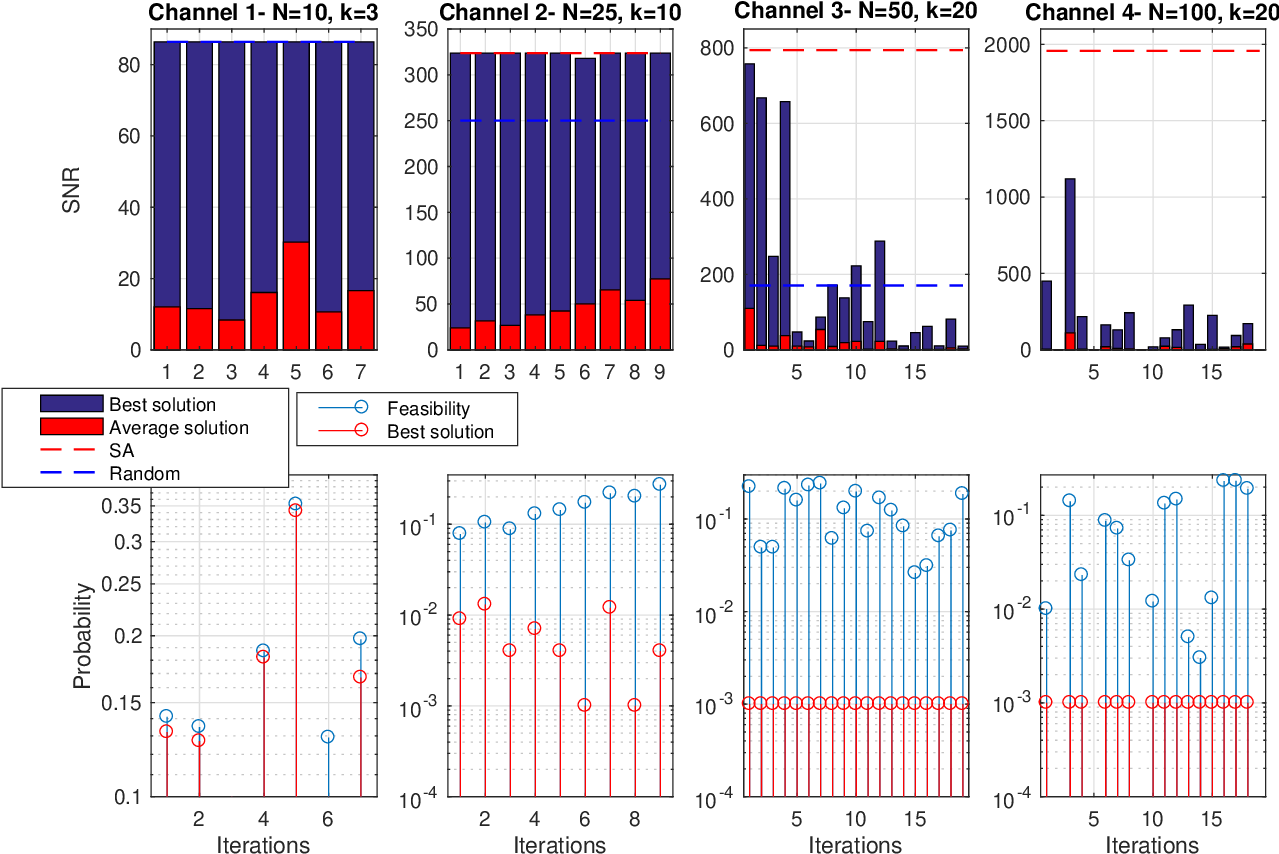}
\vspace{-0.4cm}
\caption{D-WAVE performance for the IM-based RIS design with the iterative AL method; setup with ($N=10$, $k=3$), (channel $N=25$, $k=3$), ($N=50$, $k=20$), ($N=100$, $k=20$), initial penalty $\mu=2$, initial $\lambda=2.1$, $\rho=1.1$, minimum iterations $5$, maximum iterations $20$; AL terminates when the change rate becomes smaller than $0.001$. (top) SNR for the best feasible solution at each iteration- no results means that no feasible solution found; "(red) dashed line" represents the SA benchmark, "(blue) dashed line" represents the random selection benchmark . (bottom) Probability of occurrence for the best feasible solution and feasibility probability at each iteration.}\label{fig4}
\end{figure}

\subsubsection{D-WAVE scalability and hardware limitations}
Our experimental results indicate that D-WAVE QA solution quality degrades with increasing the problem size, mainly due to hardware constraints such as limited qubits, sparse connectivity, finite precision, and environmental noise. Large and dense QUBO instances often exceed the native hardware connectivity, requiring minor embedding to map logical variables to chains of physical qubits. As these chains become longer, the system becomes more susceptible to noise and decoherence effects, often leading to distorted encoding and suboptimal solutions.

Despite these limitations, our results demonstrate that the D-WAVE solver achieves near-optimal performance for RIS configurations with up to $50$ elements, which lie within a practically relevant range given current hardware constraints and channel estimation complexity. Notably, for the conventional RIS scheme without IM, optimal solutions were achieved even with $100$ elements (see Fig.~\ref{fig1}). To further boost D-WAVE performance and scalability, techniques such as reverse annealing \cite{RQA} ({\it i.e.,} a warm-start strategy that begins from a good classical solution) can improve convergence and solution quality. Additionally, customized minor embedding methods may better utilize hardware topology and reduce resource overhead. These approaches offer promising directions for future research as D-WAVE QA hardware continues to advance.

\subsection{Capacity bounds}
We now provide numerical results based on our theoretical investigation in Section \ref{bounds} and compare the proposed IM-based scheme with the conventional beamforming scheme using binary phase shifts. For the IM-based scheme, the average SNR and capacity bound are evaluated using $P_t H_k/N_0$ and \eqref{cap}, respectively, where $H_k$ is given in Proposition \ref{prop}. For the conventional scheme, we consider $P_t H_c/N_0$ for the average SNR and $\log_2(1+P_t H_c/N_0)$ for the capacity bound, where $H_c$ is given by \eqref{conv_bf}. The analytical results are validated through Monte Carlo simulations with $10^6$ independent channel realizations.

In Fig. \ref{avg_snr}, we plot the average SNR with respect to the number of elements and different values of $k$, where simulation results are represented by lines and analytical results by markers. As discussed in Section \ref{bounds}, the case $k=0$ yields the lowest performance as its design corresponds to random phase shifts. On the other hand, values of $k$ close to $\lfloor N/2 \rfloor$ approach the performance of the conventional beamforming scheme, with the performance gap between the two decreasing as $N$ increases. Fig. \ref{avg_capacity} illustrates the average achieved Shannon capacity for $N=10$ and $N=25$. It can be clearly seen that the proposed IM-based scheme outperforms the conventional scheme, with significant gains observed at the low transmit power regime.

\begin{figure}
\includegraphics[width=0.9\columnwidth]{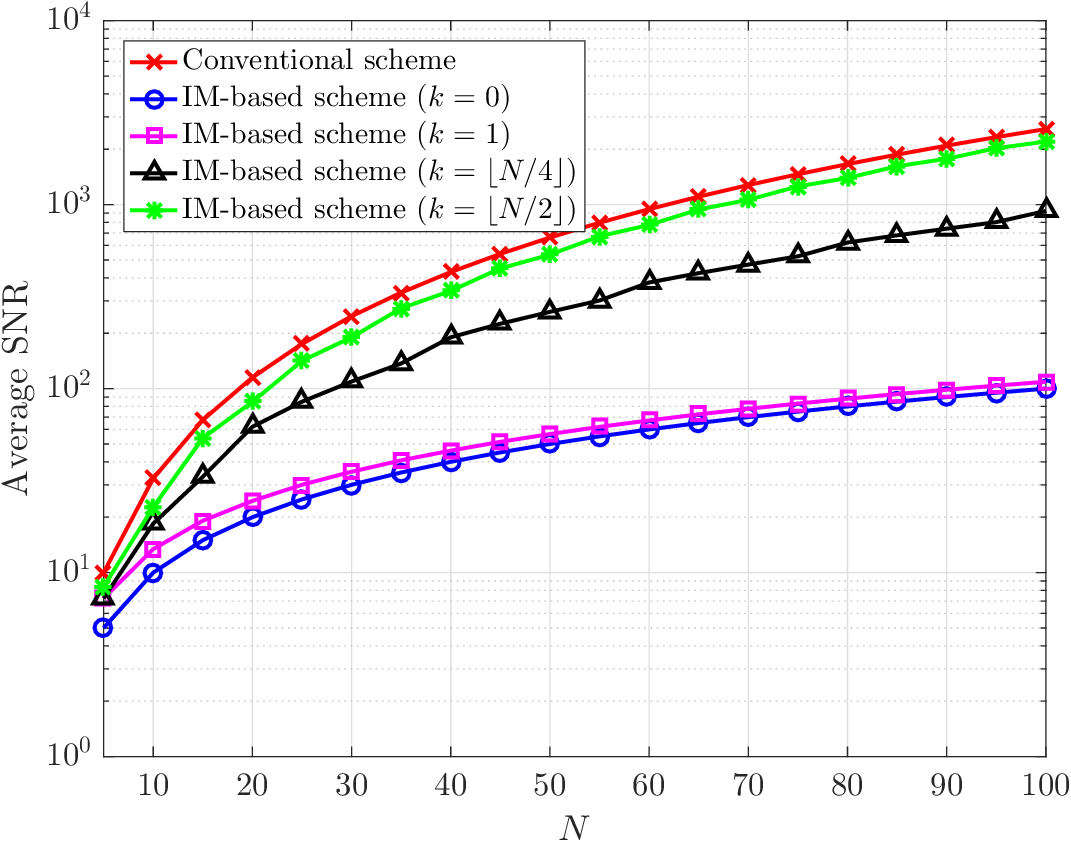}
\vspace{-0.1cm}
\caption{Average SNR with respect to $N$; $P_t = 1$ and $N_0 = 1$.}\label{avg_snr}
\end{figure}

\begin{figure}
\includegraphics[width=0.9\columnwidth]{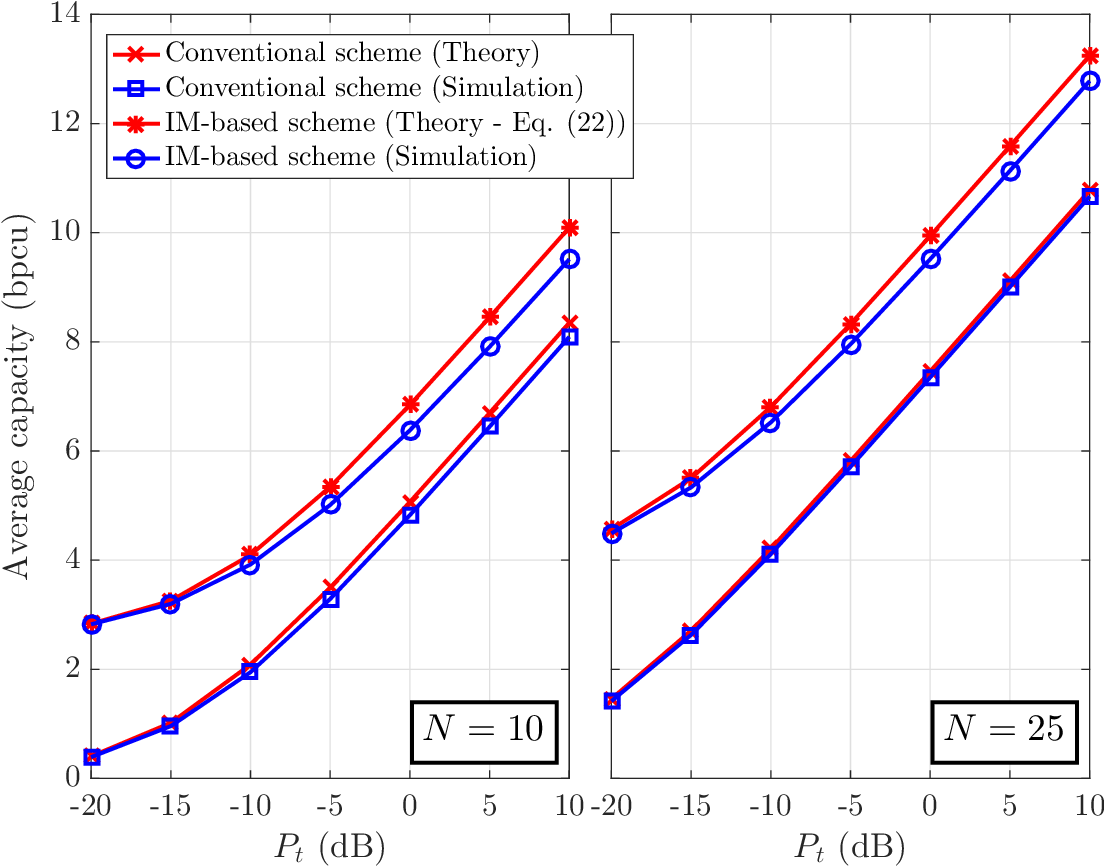}
\vspace{-0.1cm}
\caption{Average capacity with respect to $P_t$; $N_0 = 1$.}\label{avg_capacity}
\end{figure}

\section{Conclusion}\label{sec6}

This paper proposes a novel RIS-aided IM scheme for RIS configurations with $1$-bit resolution. In this approach, IM information bits are embedded within the RIS phase vector by indexing the binary phase shifts. To enhance performance, the RIS vector that maximizes the SNR at the receiver is selected, based on a predefined number of positive and negative phase shifts. The design is formulated as a combinatorial binary problem with an equality constraint at the transmitter, and we derive a theoretical upper bound for its average capacity. To address the constraint, we apply the penalty method alongside an iterative AL algorithm, which solves a QUBO problem at each iteration. We validate the IM technique and its mathematical framework using a real-world QA device from D-WAVE. Experimental results demonstrate that the proposed scheme outperforms conventional passive beamforming, and the gain becomes more significant as the number of RIS elements increases. The D-WAVE QA device efficiently solves small/intermediate RIS configurations, while the AL algorithm handles the equality constraint more effectively than the penalty method, especially given the device's limited numerical resolution. The proposed IM-based design can be extended to more complex networks, such as multi-user and multi-cell scenarios, by addressing user scheduling, interference management, and joint RIS optimization. Machine learning techniques (such as reinforcement learning or neural networks) could be employed to predict optimal RIS configurations and accelerate QUBO problem-solving via reverse annealing (warm-starting), thereby enhancing scalability and adaptability in dynamic environments. Another interesting direction is to explore RIS-only signaling schemes, where the RIS phase vector itself carries information, potentially enabling higher bit rates at the cost of increased decoding complexity and sensitivity to SNR variations.

\appendices

\section{Proof of Lemma \ref{lma}}\label{apA}

The quadratic term $\pmb{x}^T\pmb{J}\pmb{x}$ where $\pmb{J}$ is a hermitian matrix, can be converted to a binary quadratic term through a simple linear transformation. Specifically, by introducing $\pmb{x}=\pmb{1}-2\pmb{b}$ where $\pmb{b}$ is a binary vector with elements in $\{0,1 \}$, and $\pmb{1}$ is a full-ones vector, we have
\begin{align}
	\pmb{x}^T\pmb{J}\pmb{x}&=(\pmb{1}^T-2\pmb{b}^T)\pmb{J}(\pmb{1}-2\pmb{b}) \nonumber \\
	&=\pmb{1}^T\pmb{J}\pmb{1}-2 \pmb{b}^T\pmb{J}\pmb{1}-2\pmb{1}^T\pmb{J}\pmb{b}+4\pmb{b}^T \pmb{J} \pmb{b} \nonumber \\
&=4\pmb{b}^T\Re(\pmb{J})\pmb{b}-4\Re(\pmb{b}^T\pmb{J}\pmb{1})+\pmb{1}^T\pmb{J}\pmb{1} \label{in4}	\\
&=\pmb{b}^T\Re\big(4\pmb{J}-4\diag(\pmb{J}\pmb{1}) \big) \pmb{b}+\sum_{i,j}J_{i,j} \nonumber \\
&=\pmb{b}^T\pmb{J}'\pmb{b}+C,
\end{align}
where $\pmb{J}'$ is a symmetric (real) matrix and $C$ is a constant; the first term in the expression in \eqref{in4} is due to the following 
\begin{align}
\pmb{b}^T\pmb{J}\pmb{b}&=b_1\sum_{i}b_i J_{i,1}+b_2 \sum_{i}b_i J_{i,2}+\cdots+ \sum_{i}b_i J_{i,N} \nonumber \\
&=\sum_i b_i J_{i,i}+2\sum_{i,j}b_i b_j \Re(J_{i,j})\; =\pmb{b}^T\Re(\pmb{J})\pmb{b},
\end{align}
since $b_i^2=b_i$ (binary), $J_{i,j}=\overline{J_{j,i}}$, and $J_{i,i}$ are real (since $\pmb{J}$ is a hermitian matrix).

\section{Proof of Lemma \ref{lmb}}\label{apB}

By using the linear transformation between spin and binary variables (see Appendix \ref{lma}), the linear spin term $\pmb{x}^T\pmb{a}$ can be written as 
\begin{align}
	\pmb{x}^T\pmb{a}&=(\pmb{1}^T-2\pmb{b}^T)\pmb{a}\;=\pmb{1}^T\pmb{a}-2\pmb{b}^T\diag(\pmb{a})\pmb{b} \nonumber \\
	&=\pmb{b}^T\diag(-2\pmb{a})\pmb{b}+ \pmb{1}^T\pmb{a}.
\end{align}

\section{Proof of Lemma \ref{lmc}}\label{apC}
The linear spin equality $\pmb{x}^T\pmb{a}=c$ can be written as
\begin{align}
&\pmb{x}^T\pmb{a}=c \nonumber \\
&\Rightarrow (\pmb{x}^T\pmb{a}-c)^2=0 \nonumber \\
&\Rightarrow \pmb{x}^T\pmb{a}\pmb{a}^T\pmb{x}-2c\pmb{x}^T\pmb{a}+c^2=0  \nonumber \\
&\Rightarrow \pmb{b}^T \big(4\pmb{a}\pmb{a}^T-4\diag(\pmb{a}\pmb{a}^T\pmb{1}) \big)\pmb{b}+\pmb{1}^T\pmb{a}\pmb{a}^T\pmb{1} \nonumber \\
&\;\;\;\;-2c\pmb{b}^T\diag(-2\pmb{a})\pmb{b}-2c\pmb{1}^T\pmb{a}+c^2=0 \Rightarrow \label{inter1}\\
&\pmb{b}^T\bigg(4\pmb{a}\pmb{a}^T+4(c-\pmb{a}^T\pmb{1})\diag(\pmb{a}) \bigg)\pmb{b}+(c-\pmb{a}^T\pmb{1})^2=0, 
\end{align}
where the expression in \eqref{inter1} applies Lemmas \ref{lma} and \ref{lmb}. 

\section{Proof of Lemma \ref{lmd}}\label{apD}

The quadratic spin equality $\pmb{x}^T\pmb{J}\pmb{x}=c$ where $\pmb{J}$ is a hermitian matrix and $c$ is a constant, it can be written as
\begin{align}
&\pmb{x}^T\pmb{J}\pmb{x}=c \Rightarrow \pmb{x}^T\pmb{J}\pmb{x}-c=0 \nonumber \\
&\Rightarrow \pmb{x}^T\pmb{J}\pmb{x}-\frac{c}{N}\pmb{x}^T\pmb{x}=0 \nonumber \\
&\Rightarrow \pmb{x}^T \left(\pmb{J}-\frac{c}{N}\pmb{I} \right)\pmb{x}=0  \nonumber \\
&\Rightarrow \pmb{b}^T \pmb{J}'\pmb{b}+C=0, \label{ina}
\end{align}
where $\pmb{J}'=4(\Re(\pmb{J})-\frac{c}{N}\pmb{I})-4\diag((\Re(\pmb{J})-\frac{c}{N}\pmb{I})\pmb{1})$, $C=\sum_{i,j}J_{i,j}-c$,  and the expression in \eqref{ina} uses Lemma \ref{lma}.

\section{Proof of Proposition \ref{prop}}\label{prop_prf}
Let $x^*_i = e^{j\chi^*_i},$ with $\chi^*_i \in \{0,\pi\}$. The average channel gain with $k$ phase shifts equal to $+1$ can be written as
\begin{align}
H_k &= \mathbb{E}\left(\left|\sum_{i=1}^N |h_i| |g_i| e^{j(\chi^*_i+\theta_i+\phi_i)}\right|^2\right)\nonumber\\
&=\mathbb{E}\Bigg(\sum_{i=1}^N |h_i|^2 |g_i|^2 + \sum_{i=1}^N \sum_{m\neq i}^N |h_i| |g_i| |h_m| |g_m|\nonumber\\
&\hspace{3.5cm}\times e^{j(\chi^*_i+\theta_i+\phi_i-\chi^*_m-\theta_m-\phi_m)}\Bigg)\nonumber\\
&= N + \frac{\pi^2}{16} \mathbb{E}\Bigg(\sum_{i=1}^N \sum_{m\neq i}^N e^{j(\chi^*_i+\theta_i+\phi_i-\chi^*_m-\theta_m-\phi_m)}\Bigg),\label{crossterms}
\end{align}
which follows from $\mathbb{E}(|h_i|^2) = \mathbb{E}(|g_i|^2) = 1$ and $\mathbb{E}(|h_i|) = \mathbb{E}(|g_i|) = \sqrt{\pi}/2$. Note that $\theta_i + \phi_i$ is a uniform random variable in $[0,2\pi)$ as $\theta_i$ and $\phi_i$ are independent and uniform random variables in $[0,2\pi)$. In the conventional beamforming case, the quantization error $\chi^*_i - \theta_i - \phi_i$ is uniformly distributed in $[-\pi/2, \pi/2)$ \cite{ZHANG}, as each discrete phase shift can be mapped to the nearest continuous one. In other words, all phases are shifted to the first and fourth quadrant.
  
However, due to the limitation imposed by the IM-based technique, depending on $k$, the phases can be shifted or remain within the second and third quadrant {\it i.e.,} the quantization error is uniformly distributed in $[\pi/2, 3\pi/2)$. Therefore, the optimal phase shift design is to maximize the difference between the number of phases between these two regions. For example, assume there are two phases within $[\pi/2, 3\pi/2)$ and $N-2$ phases within $[-\pi/2, \pi/2)$. Then, if $k=1$, the optimal design is to shift one phase from within $[\pi/2, 3\pi/2)$ and $N-2$ phases from within $[-\pi/2, \pi/2)$, resulting in $N-1$ phases in $[\pi/2, 3\pi/2)$ and one phase in $[-\pi/2, \pi/2)$. If $k=2$, the optimal design is to shift the $N-2$ phases from within $[-\pi/2, \pi/2)$, resulting in $N$ phases in $[\pi/2, 3\pi/2)$. Hence, the terms in \eqref{crossterms} correspond to cross terms within the same region and cross terms between the two regions. Assuming there are $n$ phases within $[\pi/2, 3\pi/2)$, the optimal design will result in $(N-|k-n|)$ phases in $[\pi/2, 3\pi/2)$ and $|k-n|$ phases in $[-\pi/2, \pi/2)$. Therefore, we have
\begin{align}
&\mathbb{E}\Bigg(\sum_{i=1}^N \sum_{m\neq i}^N e^{j(\chi^*_i+\theta_i+\phi_i-\chi^*_m-\theta_m-\phi_m)}\Bigg)\nonumber\\
&= \frac{4}{\pi^2}\mathbb{E}_n\big(|k-n|(|k-n|-1)\nonumber\\
&\qquad\qquad+ (N-|k-n|)(N-|k-n|-1)\nonumber\\
&\qquad\qquad- 2|k-n|(N-|k-n|)\big),
\end{align}
where the first two terms refer to the number of pairs within the same region with $\mathbb{E}(e^{j(\chi^*_i+\theta_i+\phi_i-\chi^*_m-\theta_m-\phi_m)}) = 4/\pi^2$, and the last term to the number of pairs between the regions with $\mathbb{E}(e^{j(\chi^*_i+\theta_i+\phi_i-\chi^*_m-\theta_m-\phi_m)}) = -4/\pi^2$. Note that there are $\lfloor N/2\rfloor+1$ (due to symmetry) distinct values to consider for $n$. Moreover, since $\theta_n + \phi_n$ is uniform in $[0,2\pi)$, the probability of $n$ phases appearing is $(1/2)^N \binom{N}{n}$. Therefore,
\begin{align}
H_k = N + \frac{\pi^2}{16} \frac{4}{\pi^2} \sum_{n=0}^{\lfloor N/2\rfloor}& 2\frac{1}{2^N} \binom{N}{n}\bigg[|k-n|(|k-n|-1)\nonumber\\
&+ (N-|k-n|)(N-|k-n|-1)\nonumber\\
&- 2|k-n|(N-|k-n|)\bigg].
\end{align}
Observe that each term is considered twice. This is valid for odd $N$, whereas for even $N$, the $N/2$ term needs to be considered just once. After several algebraic manipulations, the final expressions are derived.

\begin{IEEEbiography}[{\includegraphics[width=1in,height=1.25in,clip,keepaspectratio]{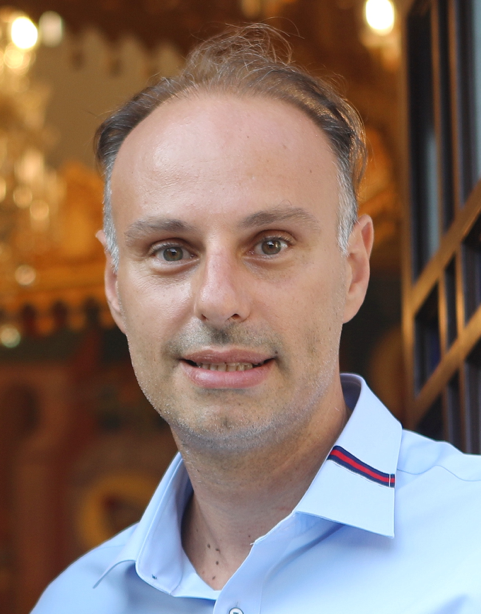}}]{Ioannis Krikidis} (F?19) received the diploma in Computer Engineering from the Computer Engineering and Informatics Department (CEID) of the University of Patras, Greece, in 2000, and the M.Sc and Ph.D degrees from \'Ecole Nationale Sup\'erieure des T\'el\'ecommunications (ENST), Paris, France, in 2001 and 2005, respectively, all in Electrical Engineering. From 2006 to 2007 he worked, as a Post-Doctoral researcher, with ENST, Paris, France, and from 2007 to 2010 he was a Research Fellow in the School of Engineering and Electronics at the University of Edinburgh, Edinburgh, UK.
	
He is currently a Professor at the Department of Electrical and Computer Engineering, University of Cyprus, Nicosia, Cyprus. His current research interests include wireless communications, quantum computing, 6G communication systems, wireless powered communications, and intelligent reflecting surfaces. Dr. Krikidis serves as an Associate Editor for IEEE Transactions on Wireless Communications, and Editor in Chief for Frontiers in Communications and Networks. He was the recipient of the Young Researcher Award from the Research Promotion Foundation, Cyprus, in 2013, and the recipient of the IEEEComSoc Best Young Professional Award in Academia, 2016, and IEEE Signal Processing Letters best paper award 2019. He has been recognized by the Web of Science as a Highly Cited Researcher for 2017-2021. He has received the prestigious ERC Consolidator Grant for his work on wireless powered communications.
\end{IEEEbiography}

\begin{IEEEbiography}[{\includegraphics[width=1in,height=1.25in,clip,keepaspectratio]{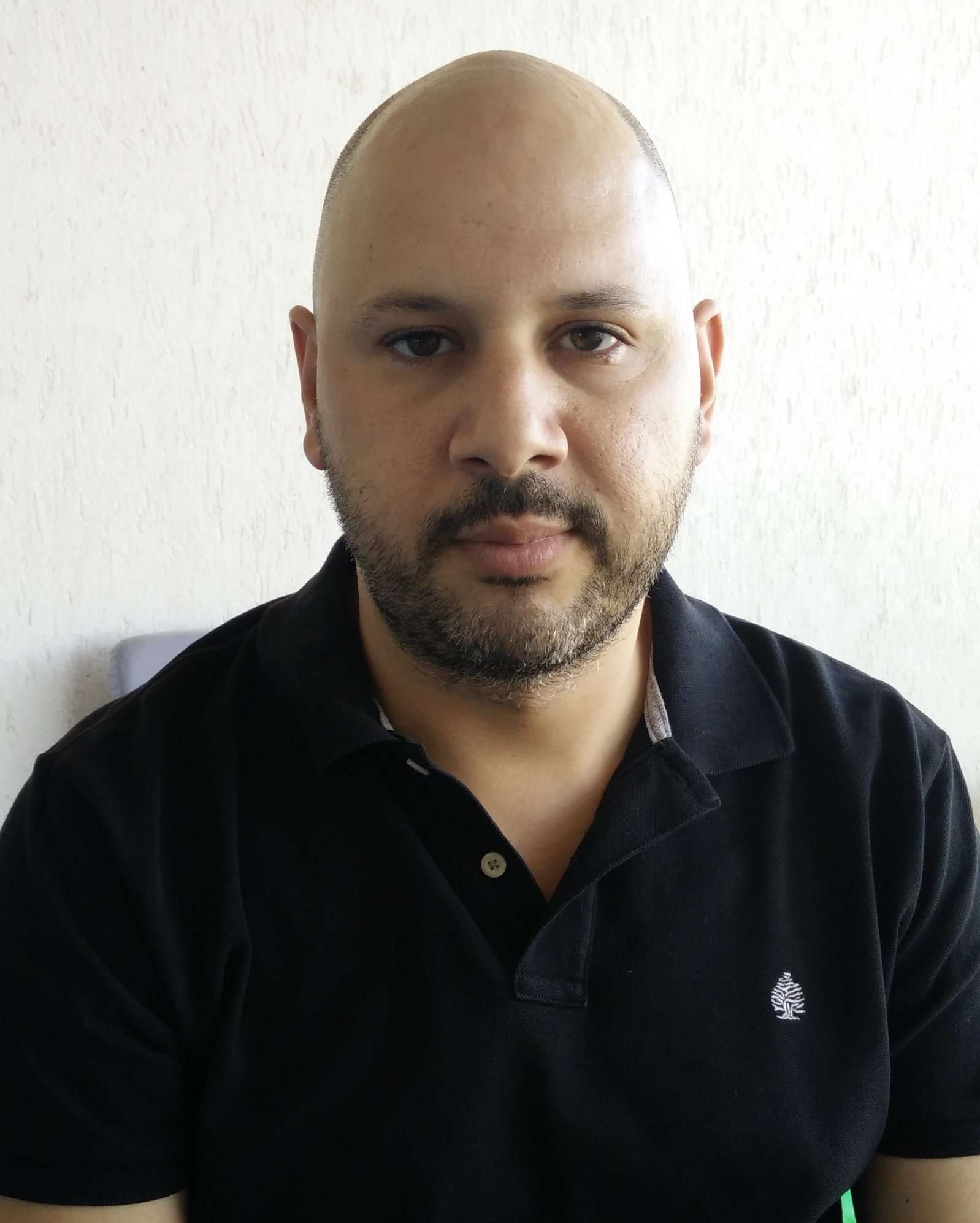}}]{Constantinos Psomas} (M'15-SM'19) holds a B.Sc. in Computer Science and Mathematics (First Class Honours) from Royal Holloway, University of London, an M.Sc. in Applicable Mathematics from the London School of Economics and a Ph.D. in Mathematics from The Open University, UK. He is an Assistant Professor at the Department of Computer Science and Engineering, European University Cyprus. From 2011 to 2014, he was a Postdoctoral Research Fellow with the Department of Electrical Engineering, Computer Engineering and Informatics, Cyprus University of Technology and, from 2014 to 2025, he was a Research Fellow with the Department of Electrical and Computer Engineering, University of Cyprus. Dr. Psomas serves as an Associate Editor for the IEEE Transactions on Communications. His current research interests include wireless powered communications, fluid/movable antennas and reconfigurable intelligent surfaces.
\end{IEEEbiography}

\begin{IEEEbiography}[{\includegraphics[width=1in,height=1.25in,clip,keepaspectratio]{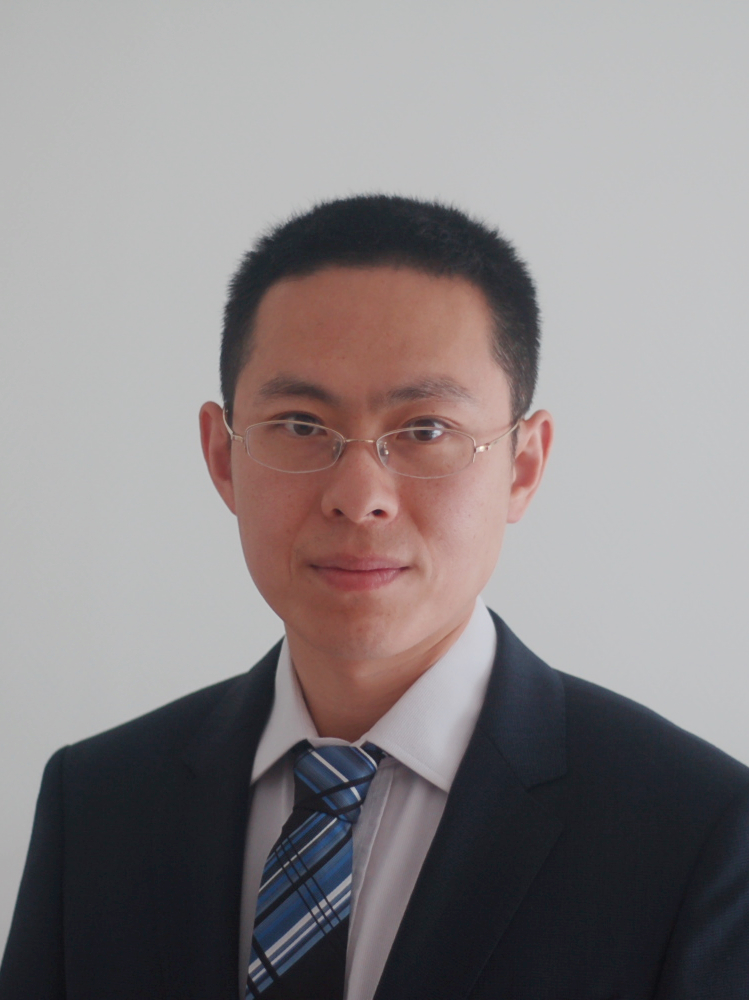}}]
	{Gan Zheng} (S'05-M'09-SM'12-F'21)  received the BEng and the MEng from Tianjin University, Tianjin, China, in 2002 and 2004, respectively, in Electronic and Information Engineering, and the PhD degree in Electrical and Electronic Engineering from The University of Hong Kong in 2008. He is currently Professor in Connected Systems in the School of Engineering, University of Warwick, UK. His research interests include machine learning and quantum computing for wireless communications, reconfigurable intelligent surface, fluid antennas and edge computing. He is the first recipient for the 2013 IEEE Signal Processing Letters Best Paper Award, and he also received 2015 GLOBECOM Best Paper Award, and 2018 IEEE Technical Committee on Green Communications \& Computing Best Paper Award. He was listed as a Highly Cited Researcher by Thomson Reuters/Clarivate Analytics in 2019. He currently serves as an Associate Editor for IEEE Wireless Communications Letters and IEEE Transactions on Communications.
\end{IEEEbiography}

\end{document}